\newtheorem{theorem}{Theorem}
\newtheorem{lemma}{Lemma}
\def\bm#1{\mathbf{#1}}
\title{	\huge Optimal Transmission Using a Self-sustained Relay in a Full-Duplex MIMO System}
\author{Rafia Malik and Mai Vu\\
Department of Electrical and Computer Engineering, Tufts University, Medford, MA, USA\\
Email: rafia.malik@tufts.edu, mai.vu@tufts.edu} 
\date{\normalsize{March 17, 2017}} 
\begin{document}
\maketitle 
\begin{abstract}
This paper\footnote{\textcolor{red}{Full citation for published paper- R. Malik and M. Vu, "Optimal Transmission Using a Self-Sustained Relay in a Full-Duplex MIMO System," in IEEE Journal on Selected Areas in Communications, vol. 37, no. 2, pp. 374-390, Feb. 2019.}} investigates wireless information and power transfer in a full-duplex MIMO relay channel where the self-sustained relay harvests energy from both source transmit signal and self-interference signal to decode and forward source information to a destination. We formulate a new problem to jointly optimize power splitting at the relay and precoding design for both the source and relay transmissions. Using duality theory, we establish closed-form optimal primal solutions in terms of the dual variables, based on which we then design a customized and efficient primal-dual algorithm to maximize the achievable throughput. Numerical results demonstrate the rate gains from using multiple transmit and receive antennas in both information decoding and energy harvesting, and the significant benefit of harvesting energy from self-interference signals. We also extend our analysis to the case when channel state information is only available at receiving nodes and show how our algorithm can optimize the power splitting at the relay for it to remain self-sustained. Through analysis and simulation, we demonstrate that an optimal combination of non-uniform power splitting, variable power allocation, and self-interference power harvesting can effectively exploit a full-duplex MIMO system to achieve significant performance gains over existing uniform power splitting and half-duplex transmissions. 
\end{abstract}

\begin{IEEEkeywords}
Full-duplex MIMO, energy-harvesting, self-interference, precoding, power splitting, optimization
\end{IEEEkeywords}

\section{Introduction}
Far-field, radio frequency (RF) energy harvesting has recently garnered significant interest for communication systems with the prospect of simultaneous information and power transfer. Theoretical bounds showing the trade-off between energy and throughput have been studied in literature~\cite{Ulukus2015}. Due to the physical limitation of an energy harvesting circuit in its inability to process any information embedded in the RF signal, techniques like time switching and power splitting are seen in practice to divide the RF signal into separate parts for information decoding and energy harvesting~\cite{Krikidis2014}~\cite{Rui2013}. In general, power splitting is more efficient in utilizing the available signal power in any given time slot and therefore leads to reduced transmission delay and increased spectral efficiency compared to time switching. Uniform power splitting is commonly employed when developing dual purpose wireless power/information transfer models~\cite{Alouini2016}~\cite{Rui2013}. In this scheme, all antennas at the receiver split power using the same ratio between information decoding and energy harvesting parts. While such uniform power splitting among the antennas is simpler from an analysis and implementation point of view, it may not always be optimal. Recently, significant performance gains have been demonstrated for a half-duplex MIMO relay channel by directing the relay received power into beams and using non-uniform power splitting on these beams~\cite{Malik2018}.

Wireless charging of communicating nodes offers the inherent advantage of untethered mobility. Utility of wireless power transfer in self-sustained relays, in particular, has emerged as an interesting research avenue, where the relays use harvested power for information forwarding, rather than depleting their own power resources~\cite{Bruno2018}. Cooperation in communication networks, using relays, leads to performance enhancement by overcoming the effects of shadowing or by extending network coverage~\cite{Molisch2012}. With the increase in the number of connected devices, especially those with low power requirements, we can envision future networks employing relays capable of providing cooperation in terms of both information and power. In terms of wireless powered communication networks, having energy harvesting relay nodes can be particularly beneficial for distributed cooperation or multi-user energy transmission since unlike wired charging, the nature of wireless charging is broadcast, enabling both directed and opportunistic energy harvesting~\cite{Kim2016}.

Most wireless architectures in literature devise algorithms for relay channels with the half-duplex constraint~\cite{Alouini2016} and using single antenna links~\cite{Blum2016}\cite{Poor2016}\cite{Tran2014}. To cope with the growing demand for high throughput transmission, however, MIMO and full-duplex transmissions may offer a viable solution. MIMO links can exploit their degrees of freedom gain from the increased spatial dimensions for communication by using multiple transmit and receive antennas. By spatially multiplexing several data streams onto the channel, MIMO can offer enhanced throughput performance~\cite{Tse2005}. Prior works in joint information and energy transfer have considered systems with multiple antennas at the transmitting nodes and single antennas at the receiving nodes~\cite{Rui2015}\cite{Zhao2017}, or systems with multiple transmit/receive antennas only at the relay node~\cite{Mohammadi2015}. While such systems can provide throughput gains over single antenna transmission, a MIMO system with multiple antennas at all nodes can further enhance the system's performance not only in terms of the achievable rate, but also the harvested energy.

Full Duplex communication is another technology that can help boost throughput gains, since it offers increased spectral efficiency compared to half duplex communication by utilizing the entire time and bandwidth for two-way data transmission. However, tackling the performance degradation due to self-interference in full-duplex communication is one of the main challenges to its adoption. This self-interference deteriorates the performance more as the number of antennas is increased as is the case for MIMO. Two classes of techniques for self-interference cancellation exist in literature-- passive cancellation and active cancellation~\cite{Hanzo2015}. Passive cancellation can be achieved using directional isolation of the main lobes of the transmit/receive antenna array. Self-interference can also be passively reduced by increasing the path loss via shielding or by increasing the antenna separation~\cite{Sahai2014}. Active cancellation, on the other hand, uses a separate RF chain for self-interference mitigation~\cite{Katti2014}. It can employ training-based algorithms in the time domain, or exploit the increased degrees of freedom offered in the spatial domain by the antenna arrays of MIMO. One example of active cancellation is to use cancellation circuits together with DSP algorithms, relying on training and channel estimation for the self-interference channel~\cite{Katti2014}. RF impairments, however, hinder the complete elimination of self-interference, and some residual self interference remains in the baseband~\cite{Korpi2014}. Baseband residual self-interference leads to an increase in the noise floor, or equivalently corresponds to an SNR loss, and is therefore detrimental to the achievable rate performance; however, a recent combination of passive and active cancellation has been demonstrated to bring this residual self-interference down to the noise floor~\cite{Katti2014} \cite
{Valkama2016}. On the other hand, self-interference may prove beneficial in the RF domain, where the energy from the self-interference signal can be harvested, and consequently increase the power available to the node for subsequent transmissions~\cite{Kim2016}. 

It is not too far-fetched to envision a wireless system employing all the aforementioned technologies: wireless power transfer, MIMO and full-duplex communication in a cooperative relaying setting. One application could be drone-assisted cellular networks, where drones act as mobile base-stations (for instance in lieu of a failed base-station or to provide temporary coverage for a sports event). These drones have the capability of simultaneously performing wireless charging along with relaying users' cellphone signals to the main base station~\cite{Hossain2018}\cite{Kakishima2018}. Such a novel system inherently brings about new challenges, particularly in the design of transmit signals and the optimization for wireless power charging. 
\subsection*{Major Contributions}
In this work, we consider a MIMO communication system assisted by a full-duplex relay, where the relay is capable of harvesting energy. This model generalizes several existing problems considered in literature on Simultaneous Wireless Information and Power Transfer (SWIPT) systems by integrating both MIMO and full-duplex features, which to our knowledge is the first to do so. To focus on the benefits of MIMO and full duplex features, we consider the scenario in which the relay is self-sustained, that is, it has no power source of its own and hence relies solely on energy harvesting for its operations. The analysis and results obtained can be extended to the case where the relay has a power source and uses energy harvesting to supplement its power consumption.

Specifically, we formulate a novel optimization problem to maximize the throughput in a wireless powered two-hop full duplex MIMO relay channel, where all the nodes, source, relay and destination, are equipped with multiple antennas. Our formulation explicitly considers the effect of self-interference in full duplex communication. We show how using a combination of harvesting energy from the source signal along with the self-interference signal in the RF domain, and using active self-interference cancellation in the baseband, allows us to exploit performance gains of multiple antennas and full-duplex communications. Our analysis shows the effects of various MIMO channels including the self-interference channel on the resulting optimal power allocation and non-uniform power splitting ratios. Based on the Lagrange duality theory, we design an efficient primal-dual algorithm for jointly optimizing the source, relay precoders and the relay power splitting ratio. The algorithm allows us to demonstrate the significant performance gain of our model over the traditionally used uniform power splitting scheme, half-duplex MIMO transmission, and also full-duplex MIMO transmission without self-interference energy harvesting. We further extend our algorithm to the practical scenario where the channel state information is available at receiver nodes only. 

The main contributions of this work can be summarized as follows.
\begin{enumerate}
\item To our best knowledge, this is the first work to consider a truly MIMO setting for a full-duplex SWIPT relay channel, with multiple antennas at all nodes. While MIMO self-sustained relay has been considered in~\cite{Alouini2016}, it was for a half-duplex setting and employed uniform power splitting. Having multiple antennas at all nodes and a full-duplex relay introduces complicated coupling between all optimizing variables, namely the precoding matrices in both hops and relay power splitting ratios which can be non-uniform. 
\item This work shows the optimality of non-uniform power splitting for MIMO transmissions and demonstrates the performance gains in comparison to the traditional uniform power splitting~\cite{Alouini2016}\cite{Zhao2017}. While uniform power splitting can be optimal for systems with a single antenna at one or all nodes~\cite{Rui2013}, it is shown to be strictly sub-optimal for MIMO transmission~\cite{Malik2018}. Uniform power splitting, however, is simpler to implement in practice. Thus in addition to our main focus on non-uniform power splitting, we also propose an efficient algorithm for uniform power splitting for full-duplex relaying which is shown to be more efficient compared to existing methods employing grid searches~\cite{Alouini2016}.
\item We utilize RF self-interference of a full-duplex receiver towards increasing the harvested energy at the relay and also consider the power loss for interference cancellation circuitry required for information decoding in baseband. While the concept of self-energy recycling has been considered for a MISO system in~\cite{Rui2015}, unlike our work, this reference did not consider a truly full-duplex system in terms of information transmission, rather the strategy is for the relay to transmit information in the second hop while concurrently harvesting power in the first hop from the source signal which is only used for power transferring in this portion of the time slot. Our work is the first to demonstrate the potentially immense benefit of harvesting self-interference energy at radio frequencies in a full-duplex system, while still allowing this interference to be canceled for information decoding at baseband.
\item We use the concept of analog beamforming at the relay to rotate the received signal (energy beam), which provides additional degrees of freedom for simultaneously maximizing the transmission rate and the energy harvested from both the source signal and the self-interference signal arriving at the relay. The idea of analog beamforming at the relay was also proposed in~\cite{Malik2018} for half-duplex transmission, which helped decouple the rate optimization of the two hops and reduce the power allocation in second hop to simple water-filling. For full-duplex transmission, the self-interference signal strongly couples the two hops via the harvested energy constraint and significantly complicates the analysis of the optimization. 
\item We propose an efficient algorithm to solve the optimization problem to maximize the overall source to destination (S-D) transmission rate. The S-D throughput is affected by the power splitting ratios, the analog receive beamforming at the relay and the precoding designs at the source and relay, whose optimization is posed as a semi-definite programming (SDP) problem. Through analysis, we re-formulate this complicated problem into an equivalent simpler convex problem with scalar variables, and propose a computationally efficient algorithm which can perform multiple times faster than standard solvers~\cite{cvx} and prior sequential algorithms~\cite{Alouini2016}. 
\end{enumerate}
\begin{figure}[t]
\centering
\includegraphics[scale = 0.485]{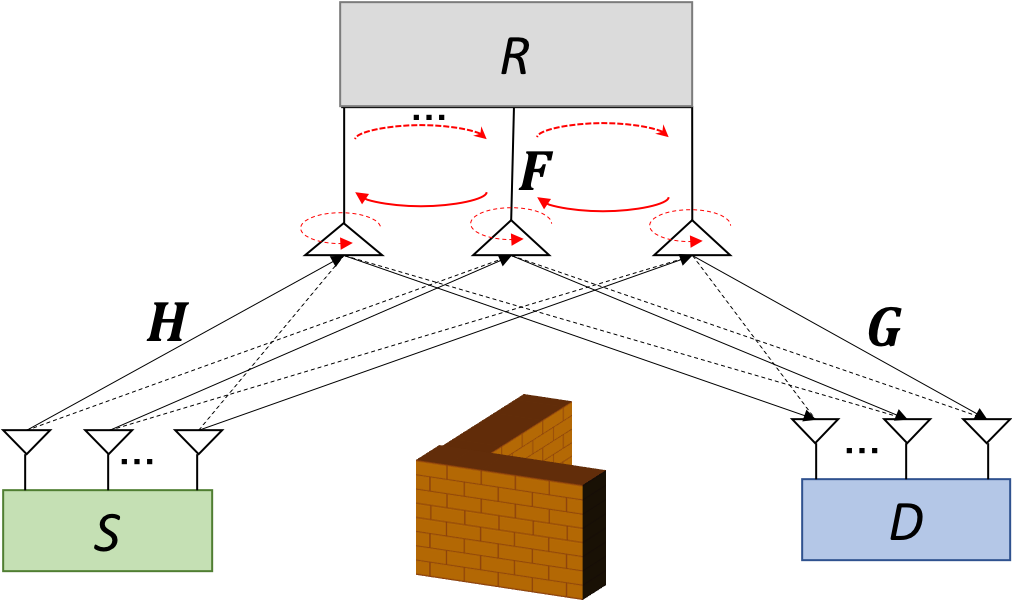}
\caption{MIMO two-hop relaying channel model}
\label{fig:channel_model}
\end{figure} 
\subsubsection*{Organization}The remainder of this paper is organized as follows. Section II presents the channel and signal models. Section III presents the formulation of the rate maximization problem and its dual problem characterization. Section IV presents the optimal primal solution for power splitting and allocation. Section V presents the primal-dual optimization algorithm, Section VI presents the optimization problem and its solution for the CSIR case and Section VII shows numerical results and analysis. Finally, Section VIII concludes this paper.

\subsubsection*{Notation} For a square matrix $\boldsymbol{X}$, $\text{tr}(\boldsymbol{X})$, $\lvert \boldsymbol{X} \rvert$, $\boldsymbol{X}^{-1}$, $\boldsymbol{X}^{\dagger}$ denotes the trace, determinant, inverse and pseudo-inverse respectively, and $\boldsymbol{X} \succeq 0$ means that $\boldsymbol{X}$ is a positive semi-definite matrix. For an arbitrary sized matrix, $\boldsymbol{Y}$,  $\boldsymbol{Y}^\ast$ denotes the Hermitian transpose, $\otimes$ denotes the Kronecker product, and $\textbf{diag}(y_1,...,y_N)$ denotes an $N\times N$ matrix with diagonal elements $y_1,...,y_N$. $\boldsymbol{I}$ denotes an identity matrix, and $\boldsymbol{0, 1}$ denote an all zeros vector and all ones vector respectively. The standard circularly symmetric complex Gaussian (CSCG) distribution is denoted by $\mathcal{CN}(\boldsymbol{0}, \boldsymbol{I})$, with mean $\boldsymbol{0}$ and covariance matrix $\boldsymbol{I}$. $\mathbb{C}^{k \times l}$ and $\mathbb{R}^{k \times l}$ denote the space of $k \times l$ matrices with complex and real entries respectively. Superscript $^\star$ denotes the optimal value for the corresponding variable.

\section{System Model}
\subsection{Channel Model}
Consider a full duplex, decode-and-forward (DF) MIMO relay communication channel, where the direct transmission link suffers from significant path loss and fading, such that the relay channel is always used for two-hop data transmission from source to destination as shown in Figure~\ref{fig:channel_model}. The source S, relay R and destination D are equipped with $N_s$, $N_r$ and $N_d$ antennas respectively, and the S-R and R-D fading channels are modeled by matrices $\textbf{H} \in \mathbb{C}^{N_r \times N_s}$ and $\textbf{G} \in \mathbb{C}^{N_d \times N_r}$ respectively. We employ the standard assumptions of quasi-static block fading channels and that each node can access perfect local channel state information on its receive links. For transmiters' CSI, we will consider both cases: (i) perfect CSI at the transmiters in order to reveal theoretical performance bounds, and (ii) no CSI at the transmiters for comparison with a practical scenario. For the case with perfect transmit CSI case, our analysis and results are independent of channel statistical models and are applicable to any; whereas for the case with no transmit CSI, we assume an i.i.d. Rayleigh fading channel model where the entries of the channel gain matrices, $\bm H$ and $\bm G$, are independent, identically distributed and circular symmetric complex Gaussian~\cite{Rappaport2009}.

\begin{figure}
\centering
\includegraphics[scale=0.5]{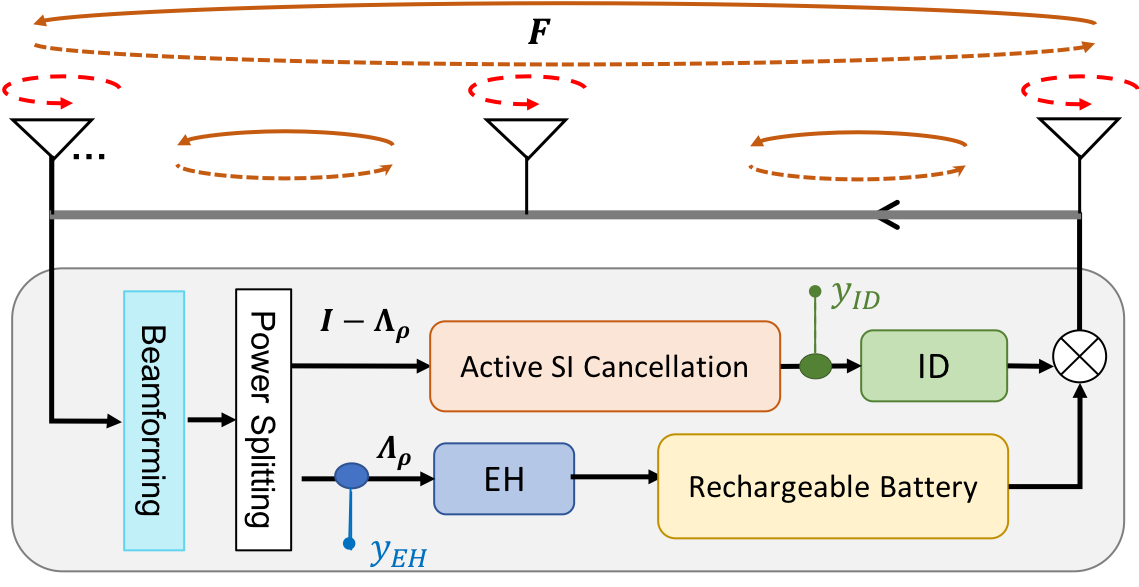}
\caption{Full-duplex MIMO relay model with self-interference}
\label{fig:relay_model}
\end{figure}

\subsection{Full-duplex MIMO relay and self interference models}
We consider a full-duplex MIMO relay as shown in Figure~\ref{fig:relay_model} with $N_r$ antennas used for simultaneous transmission and reception as successfully demonstrated in~\cite{Katti2014} (note that our model can also apply to the case of separate transmit and receive antennas as in~\cite{Sahai2014}). Similar to~\cite{Malik2018}, we introduce a receive beamforming matrix $\boldsymbol{Q_r}$ at the relay which performs analog beamforming in the RF domain to rotate the received signal and direct the received energy in beams to the power splitter, which acts on each received beam. The traditional power splitting per antenna is included as a special case. This received beamforming matrix provides more degrees of freedom in optimizing for both the transmission rate and harvested energy and is essential in decoupling the precoder design in the two hops and power splitting in later rate optimization. 

Having a full-duplex relay introduces self-interference which impacts both energy harvesting and signal decoding. The self-interference signal at the radio-frequency (RF) carries energy and is useful for energy harvesting~\cite{Rui2015}. To harvest this RF energy, we need to model the RF self-interference channel as seen by relay antennas. On the other hand, the same self-interference signal after going through both passive and active interference cancellation circuitry will impede information decoding at the baseband~\cite{Katti2013}. For this reason, we also need to model the baseband residual self-interference after cancellation. Note that self-interference at the radio frequency and at the baseband are different (see Figure~\ref{fig:relay_model} for the relay model of which we will discuss the signal details in Section II-C). Next we discuss in detail the models of self-interference channel at the radio frequency for energy harvesting and residual self-interference at the baseband for information decoding.

RF self-interference channel constitutes of a quasi-static internal self-interference component and an external time-varying component~\cite{Chen2018}. The internal subchannel depends on the physical isolation between transmit and receive chains/antennas,  while the external subchannel depends on the reflections from the surroundings. We adopt the experimental model in~\cite{Sabharwal2012} for the self-interference channel in the RF domain, $\bm F$, as a Rician fading channel with a strong line of sight component, that is, with a high K-factor. Unlike the separate antenna model in~\cite{Sabharwal2012}, however, our system uses a shared antenna model, whose simultaneous transmission and reception implies that the strong transmit signal from the transmitter chain would directly couple to the receiver chain~\cite{Korpi2016}. While the technique for physical isolation as a means of self-interference cancellation is different for the separate and shared antenna cases, the self-interference signal characteristics can be assumed to be similar~\cite{Xing2017}.

Next we discuss the residual self-interference (RSI) at the baseband after the self-interference signal goes through cancellation circuitry. In contrast to the RF self-interference, statistical characterization for the RSI in baseband still remains an open problem, especially for MIMO radios~\cite{Sabharwal2014}. The RSI amount is highly dependent on the cancellation methods implemented, especially active self-interference cancellation. Recent full-duplex MIMO experiments and measurements have demonstrated that the residual interference power is independent of both the transmit power and the number of antennas, consequently it can be considered as an increase in noise floor or equivalently as an SNR loss (see~\cite{Katti2014}, Figures 10,11) (note that the same result was also observed for experimental single antenna systems in~\cite{Katti2013}, Figure 7). We apply this result and assume that the RSI in baseband is an additional, independent noise source, leading to an SNR loss with respect to the achievable $R_{S-R}$ rate. Since the statistical characterization of RSI is still largely unknown, like previous works (~\cite{Katti2013}\cite{Katti2014}\cite{Tran2014} ) we assume this RSI to be zero mean, additive Gaussian noise. This assumption is reasonable since there are numerous sources of imperfections in the RF chain, and central limit theorem can be applied to this unpredictable (random) noise (RSI). 

Note that in the literature, there are models depicting the RSI noise power proportional to the transmit power~\cite{Sabharwal2012}\cite{Tran2014}, depending on the employed self-interference cancellation techniques. Nevertheless, the recent full-duplex MIMO experiment in~\cite{Katti2014} shows that significant reduction in RSI is now possible, and it can be decreased to be as low as 1 dB in baseband independent of the transmit power.  The same empirical results also show that the RSI power remains approximately constant as the number of RF chains (corresponding to antennas in a MIMO radio) is increased. For this reason, we model the RSI channel in baseband as an additional Gaussian noise with constant power, i.e., $\bm B \sim \mathcal{N} (\bm 0, \sigma_f^2 \bm I )$. However, in the numerical results, we also include results for the case when RSI in baseband is directly proportional to the relay transmit power $P_r$, such that $\sigma_f^2 = \alpha P_r^\beta$, with $\beta = 1$ which implies a linear increase in RSI with $P_r$.

The significant mitigation of residual self-interference in baseband to the noise floor level requires a combination of passive and active cancellation techniques. While the former does not consume any additional power, the latter makes use of an additional RF chain with analog to digital converters, RF attenuators etc, and these components consume power. With recent advances in semiconductor technology, low power RF components are readily available~\cite{Lee2014} and power-efficient techniques for self interference cancellation have been developed~\cite{Sayed2017}\cite{Tijani2017}. We therefore reasonably assume and then show via numerical results that provided sufficient transmit power from the source, the gain obtained from self-interference power harvesting in the RF domain at the relay would make up for the power consumed in active cancellation, such that the relay operation can remain self-sustained with no external power source requirement. Although not considered in this paper, our models,  analysis and results can be extended in a straightforward way to the case that the relay has its own power source. In either case, we will need to deduct an amount of power used for active self-interference cancellation from the RF harvested power before making it available for relay transmission.

The relay is a self-sustained node, employing a harvest-use policy using a set of two rechargeable batteries. The two batteries are used to store RF harvested energy and supply power using a time switching scheme. While one battery is used for powering transmission to the destination, the other is used to store the energy harvested during that time. Switching between the two batteries caters for the half-duplex constraint of energy transfer which prevents an energy storage device to charge and discharge at the same time to avoid thermal throttling~\cite{Rui2015}\cite{Luo2013}. This corresponds to a virtual harvest-use scheme for a battery model, where the harvested energy is utilized immediately and is not saved for future use~\cite{Poor2016}. Battery models for a harvest-store-use-scheme have been studied in the literature, where the stored energy is dispensed adaptively for future use~\cite{Ulukus2015}\cite{Blum2016}\cite{Poor2016}. Optimizing battery storage and usage then can be added as an extension to the optimization problem of the current system, but is out of the scope for this paper.

\subsection{Signal Model}
The relay harvests energy from both the source signal and the self-interference signal at the Energy Harvesting (EH) receiver and uses the harvested energy for transmitting the signal decoded at its Information Decoding (ID) receiver to the destination. Without loss of generality, we assume that both the energy harvesting and information decoding receivers are co-located at the relay and operate at the same frequency, the processing power for receive and transmit circuits at the relay is assumed to be negligible, except for the active cancellation circuitry (whose power consumption is taken into account in our formulation), and that both the batteries have sufficient energy initially. We make these assumptions considering the main motivation of this work which is to focus on the energy-harvesting at the relay related to full-duplex transmission and therefore assume negligible affect on our model from other circuit elements which are common in both full-duplex and half-duplex communications.

The received signal at the relay is fed into an analog beamforming matrix, which separates the signal into received beams. The received power in each beam is then divided for EH and ID, according to the power splitting ratio $\rho_i \in (0,1) \ \forall i \in [1,N_r]$, and we define $\boldsymbol{\Lambda_{\rho}} = \textbf{diag} (\rho_1 ... \rho_{N_r})$. We adopt the standard assumption that the power splitter is perfect and induces no noise in the RF signal. The input signal at the relay is as given below, where $\boldsymbol{y_{ID}}$ is the signal received at the information decoding receiver and $\boldsymbol{y_{EH}}$ is the signal received by the energy harvesting receiver as illustrated in Figure \ref{fig:relay_model}.   
\begin{align*}
&\boldsymbol{y_{ID}} = (\bm I - \boldsymbol{\Lambda_{\rho}})^{1/2} \boldsymbol{Q_r H x_s} + \boldsymbol{z_f}+ \boldsymbol{z_p} \tag{1a}\\
&\boldsymbol{y_{EH}} = (\boldsymbol{\Lambda_{\rho}})^{1/2} \boldsymbol{Q_r}(\boldsymbol{H x_s + F x_r}) \tag{1b}
\end{align*}
Here $\boldsymbol{x_s}$ is the signal vector transmitted from the source with the average transmit power constraint given as $\mathbb {E}\Big[\|\boldsymbol{x_s}\|_2^2\Big] = \text{tr}\Big(\boldsymbol{W_s}\Big) \leq P_s$, where $\boldsymbol{W_s} = \mathbb {E}\Big[ \boldsymbol{x_s x_s^\ast} \Big ]$ is the source covariance matrix. The noise term $\boldsymbol{z_p}$ with variance $\sigma_p^2$ denotes traditional receiver noise and $\boldsymbol{z_f}$ with variance $\sigma_F^2$ represents the effect of residual self interference in baseband as discussed earlier. We can then denote the total effective noise at the baseband as $\boldsymbol{z_r = z_f + z_p}$  which is distributed as $\mathbf{z_r} \sim \mathcal{CN}(\bm 0,(\sigma_p^2 + \sigma_F^2) \bm I)$. In (1b), $\boldsymbol{x_r}$ is the signal vector  transmitted from the relay and $\boldsymbol{F}$ is the RF self-interference channel for energy harvesting. Note that in signal models, (1a) and (1b), we have clearly distinguished the effects of self-interference on information decoding (at the baseband via added noise $\boldsymbol{z_f}$) and energy harvesting (at radio frequencies via the self-interference channel $\boldsymbol{F}$), in accordance with our discussion of full-duplex relay and self-interference models in Section II-B. The receiver combining matrix $\boldsymbol{Q_r} \in  \mathbb{C}^{N_r \times N_r}$ which appears in equations (1a), (1b) is a unitary matrix representing receive analog beamforming. This beamforming matrix at the relay will be optimized to simultaneously maximize the transmission rate and the total signal power forwarded to the power splitter for optimal energy harvesting as discussed in later optimization.

The EH receiver harvests energy in the RF domain. By the law of energy conservation, we assume that the total harvested RF-band energy is proportional to that of the received baseband equivalent signal, that is, $P_{h,BB} = \eta_c P_{h,RF}$, where $\eta_c$ represents the transducer efficiency for converting harvested energy in RF domain to DC electrical energy for battery charging. For notational convenience we use $P_h$ to denote the harvested power in the baseband.

We assume that on average the energy consumed by the relay for transmission is equal to the energy harvested to prevent energy-outages in the data transmission phase~\cite{Ulukus2015}. We consider a linear energy harvesting model and assume that the received power at the relay is constant over a single time slot duration. For the single relay device considered in our model, this assumption of linear EH is logical. However, in the case of multiple energy harvesting relays, non-linear energy harvesting should be considered where the conversion efficiency varies with the received signal power~\cite{Schober2015}\cite{Clerckx2018}. The transmission power for the relay, $P_r$, is directly proportional to the harvested power, $P_h$, such that $P_r = \eta P_h$ where $\eta \triangleq \eta_c \eta_d \in [0,1]$, and $\eta_d$ is the utilizing efficiency for battery discharging. Following the approach in~\cite{Rui2013} we assume $\eta = 1$ without loss of generality. The power available for relay transmission can therefore be written as given below.
\setcounter{equation}{1}
\begin{align*}
P_r =  \frac{E_h}{T_s} &= \frac{\eta}{T_s} \big ( \text{tr} (\boldsymbol{\Lambda_{\rho} Q_r H W_s H^\ast Q_r^\ast}) \\
& + \text{tr}(\boldsymbol{\Lambda_{\rho} Q_r F W_r F^\ast Q_r^\ast}) \big) - P_{IC} \tag{2}
\end{align*}
\setcounter{equation}{2}Here the first term contains the source transmit covariance $\boldsymbol{W}_s$ and represents the energy harvested from the source transmit signal. In the second term $\boldsymbol{W}_r = \mathbb {E}\Big[\|\boldsymbol{x_r}\|_2^2\Big]$ is the relay transmit covariance matrix which contributes to relay energy harvested via self-interference, and finally $P_{IC}$ represents the power consumed for active self-interference cancellation as discussed earlier. Following the convention in~\cite{Poor2016}, we assume a block (time slot) of unit duration, $T_s = 1$ and use the expressions for harvested energy and harvested power interchangeably throughout this paper. 

The regenerative relay employs a decode-forward multi-hop relaying scheme~\cite{Gamal2011}. The relay recovers the message received from the sender
in each block and re-transmits it in the following block. The signal received at the destination is as given below, where $\mathbf{z_d} \sim \mathcal{CN}(\bm 0,\sigma_d^2 \bm I)$ is the additive noise at the destination.
\begin{equation}
\boldsymbol{y_d = Gx_r + z_d}
\end{equation}
The average transmit power constraint on the relay is related to the harvested power in (2) as $\text{tr}\Big(\boldsymbol{W_r}\Big) \leq P_r$. The receiver then decodes on the signal received from the relay to recover information transmitted from the source.

\subsection{Achievable Rate}
An achievable rate for a multi-hop relay channel is given as\cite[p.~387]{Gamal2011}.
\begin{align*}
&R = \max_{p(x_s) p(x_r)} \min\{I(X_s; Y_{ID} \vert X_r), I (X_r; Y_d)\}\\
&\hphantom{R} = \min \{\max_{p(x_s)} R_{S-R}, \max_{p(x_r)} R_{R-D} \}
\end{align*}
where the second expression follows from application of the first expression to the considered two-hop cascaded S-R and R-D channel and $R_{S-R}$ and $R_{R-D}$ are achievable rates of the first and second hop, respectively. Using the signal model in (1) and (3) and assuming optimal Gaussian transmit signals, per-hop achievable rates can be written as functions of the source and relay transmit covariance matrices $\boldsymbol{W_s}$ and $\boldsymbol{W_r}$ as
\begin{align*}
&R_{S-R} = \text{log}_2 \left | \mathbf{I} + \frac{\mathbf{(I - \Lambda_{\rho})Q_rHW_sH^\ast Q_r^\ast}}{{\sigma_p^2 + \sigma_f^2}} \right |, \\
&R_{R-D} = \text{log}_2 \left | \mathbf{I} + \frac{\mathbf{GW_rG^\ast}}{{\sigma_d^2}}  \right | \tag{4a}
\end{align*}
The two transmit covariances must satisfy the power constraints mentioned earlier, that is, $\text{tr}\Big(\mathbf{W_s}\Big) \leq P_s$ and $\text{tr}\Big(\mathbf{W_r}\Big) \leq P_r$. The overall achievable rate for S-D transmission, in bits/s/Hz, is then given as 
\begin{align*}\label{csb}
&R\big (\boldsymbol{\Lambda_{\rho}},\mathbf{Q_r, W_s,W_r} \big ) = \text{min} \Bigg\{ \max R_{S-R}, \max R_{R-D}\Bigg \} \tag{4b}
\end{align*}
\setcounter{equation}{4}To summarize, the end-to-end throughput is the minimum of the maximum rates achievable in each hop. 

\section{Optimization Problem Formulation}
\subsection{Power Splitting and Precoding Design Optimization}
In this section, we formulate the optimization problem for optimal power splitting and precoding design, assuming perfect CSI at both the receiver and transmitter (the case of no CSIT and only CSIR is considered later in Section VI). The maximization of the achievable end-to-end transmission rate is formulated as an optimization problem given below, where (5c) and (5d) are the transmit power constraints at the source and relay, respectively, and (5e) is the harvested power at the relay.
\begin{align*}
(\text {P}):&\underset {\boldsymbol{\Lambda_{\rho}}, \boldsymbol {W}_{s},\boldsymbol {W}_{r}, \boldsymbol{Q_r}}{\max } ~ R, \tag{5}\\
\text{s.t.}~ &R \leq \text{log}_2 \left | \mathbf{I} + \frac{\mathbf{(I - \Lambda_{\rho})Q_rHW_sH^\ast Q_r^\ast}}{{\sigma_p^2 + \sigma_f^2}}  \right |, \tag{5a}\\
&R \leq \text{log}_2 \left | \mathbf{I} + \frac{\mathbf{GW_rG^\ast}}{{\sigma_d^2}}  \right |, \tag{5b}\\
&\text{tr} \left ({\boldsymbol {W}_{s}}\right ) \leq P_{s}, \tag{5c}\\
&\text{tr} \left ({\boldsymbol {W}_{r}}\right ) \leq P_{r}, \tag{5d}\\
&P_r = \eta \big ( \text{tr} (\boldsymbol{\Lambda_{\rho} Q_r H W_s H^\ast Q_r^\ast}) \\
&\hphantom{P_r = }+ \text{tr}(\boldsymbol{\Lambda_{\rho} Q_r F W_r F^\ast Q_r^\ast}) \big) - P_{IC}, \tag{5e}\\
&\boldsymbol {W}_{s} \succeq 0,\quad \boldsymbol {W}_{r} \succeq 0. \tag{5f}
\end{align*}
Here $R\left ({\boldsymbol{\Lambda_{\rho}}, \boldsymbol {W}_{s},\boldsymbol {W}_{r}, \boldsymbol{Q_r}}\right )$ is the end to end transmission rate, as defined in~(\ref{csb}).

Problem (P) is a semi-definite programming (SDP) problem and includes the self-interference components in the harvested power expression and the S-R rate. By solving for the beamforming directions at the source and relay, we show in Lemma 1 that this problem reduces to solving only for the power allocation in the precoders and the power splitting factors for energy harvesting at the relay, which is equivalent to a simpler problem (P-eq) given below.
\begin{align}\label{P-eq}
&(\text {P-eq}) : ~\underset {R,\boldsymbol{\rho, p, q}}{\max } ~ R \tag{6}\\
&\hphantom {(\text {P}) : ~}\text {s.t.}~ R \leq \sum_{i=1}^{K_1} \log_2 \Big (1+\frac{(1-\rho_i)p_i\lambda_{H,i}}{\sigma_p^2 + \sigma_f^2}\Big ) \tag{6a}\\
&\hphantom {(\text {P}) : ~\text {s.t.}~} R \leq \sum_{j=1}^{K_2} \log_2 \Big( 1+\frac{q_j\lambda_{G,j}}{\sigma_d^2} \Big ) \tag{6b} \\ 
&\hphantom {(\text {P}) : ~\text {s.t.}~} \sum_{i=1}^{K_1} p_i \leq P_s \tag{6c} \\
&\hphantom {(\text {P}) : ~\text {s.t.}~} \sum_{j=1}^{K_2} q_j \left( 1 - \sum_{k=1}^{K_2} \rho_k \boldsymbol{\tilde{F}}_{kj} \boldsymbol{\tilde{F}}_{kj}^\ast \right ) \leq \sum_{i = 1}^{K_1} \lambda_{H,i} p_i \rho_i - P_{\text{IC}} \tag{6d}
\end{align}
Implicit constraints not mentioned here are $p_i \geq 0 , \ 0 \leq \rho_i \leq 1 \ \forall i \in [1,K_1], \text{ and } q_j \geq 0 \ \forall j \in [1,K_2]$. Here $\boldsymbol{\rho}$ represents the vector for power splitting ratios across the receiver beams, $\boldsymbol{p}$ denotes the power allocated across the eigenmodes from the source to relay channel, and $\boldsymbol{q}$ is the power allocation vector for transmission in the second hop from relay to destination. The matrix $\boldsymbol{\tilde{F} \triangleq U_H^\ast F V_G}$ where $\boldsymbol{U_H}$ and $\boldsymbol{V_G}$ are obtained from the Singular Value Decomposition (SVD) of the S-R and R-D channel matrices as $\boldsymbol{H = U_H \Sigma_H V_H^\ast}$ and $\boldsymbol{G = U_G \Sigma_G V_G^\ast}$, respectively. 
\begin{lemma}\label{equality}
The formulated problem (P) can be reduced to the optimization problem (P-eq) by choosing the source transmit beamforming, the relay receive beamforming and relay transmit beamforming as the singular vectors of the S-R and R-D channels. Specifically, $\boldsymbol{Q_r}^\star = \boldsymbol{U_H}^\ast$, $\mathbf{W_s}^\star = \mathbf{V_H \Lambda_s V_H^\ast}$ and $\mathbf{W_r^\star = V_G \Lambda_r V_G^\ast}$.
\end{lemma}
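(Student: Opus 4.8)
The plan is to prove that (P) and (P-eq) have the same optimal value by exhibiting an optimal solution of (P) with the stated eigenbeamforming structure and checking that, on this structured family, (P) collapses term by term to (P-eq). Throughout I would use the SVDs $\mathbf{H}=\mathbf{U_H}\boldsymbol{\Sigma_H}\mathbf{V_H^*}$ and $\mathbf{G}=\mathbf{U_G}\boldsymbol{\Sigma_G}\mathbf{V_G^*}$, write $\lambda_{H,i}$ and $\lambda_{G,j}$ for the squared singular values, and note that since (P-eq) is a restriction of (P) the inequality $\mathrm{val}(\text{P-eq})\le\mathrm{val}(\text{P})$ is immediate, so the content is the reverse direction: that the structured family attains the optimum.

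For the first hop together with the source-harvested power, the key observation is that both $R_{S-R}$ and the first trace in (5e) depend on the pair $(\mathbf{Q_r},\mathbf{W_s})$ only through $\mathbf{C}:=\mathbf{Q_rHW_sH^*Q_r^*}\succeq 0$: indeed $R_{S-R}=\log_2\big|\mathbf{I}+(\mathbf{I}-\boldsymbol{\Lambda_\rho})^{1/2}\mathbf{C}(\mathbf{I}-\boldsymbol{\Lambda_\rho})^{1/2}/(\sigma_p^2+\sigma_f^2)\big|$ and the source-harvested power is $\text{tr}(\boldsymbol{\Lambda_\rho}\mathbf{C})=\sum_i\rho_i[\mathbf{C}]_{ii}$. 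Because $\boldsymbol{\Lambda_\rho}$ is diagonal, Hadamard's inequality gives $R_{S-R}\le\sum_i\log_2\big(1+(1-\rho_i)[\mathbf{C}]_{ii}/(\sigma_p^2+\sigma_f^2)\big)$ with equality precisely when $\mathbf{C}$ is diagonal, while the harvested power sees only $\mathrm{diag}(\mathbf{C})$; hence it is without loss to take $\mathbf{C}$ diagonal, and a diagonal $\mathbf{C}$ with $[\mathbf{C}]_{ii}=\lambda_{H,i}p_i$ for $p_i\ge 0$, $\sum_i p_i\le P_s$, is exactly realized by $\mathbf{Q_r}=\mathbf{U_H^*}$ and $\mathbf{W_s}=\mathbf{V_H}\,\text{diag}(p)\,\mathbf{V_H^*}$. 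This produces constraints (6a) and (6c) and the term $\sum_i\lambda_{H,i}p_i\rho_i$ on the right-hand side of (6d).

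For the second hop, $R_{R-D}=\log_2|\mathbf{I}+\mathbf{GW_rG^*}/\sigma_d^2|$ depends on $\mathbf{W_r}$ only through $\mathbf{GW_rG^*}$, so the classical eigenbeamforming choice $\mathbf{W_r}=\mathbf{V_G}\,\text{diag}(q)\,\mathbf{V_G^*}$ with $q_j\ge 0$ is rate-optimal and gives $R_{R-D}=\sum_j\log_2(1+q_j\lambda_{G,j}/\sigma_d^2)$, i.e., (6b). With $\mathbf{Q_r}=\mathbf{U_H^*}$ already fixed from the first hop, the self-interference trace in (5e) becomes $\text{tr}\big(\boldsymbol{\Lambda_\rho}\,\widetilde{\mathbf{F}}\,\text{diag}(q)\,\widetilde{\mathbf{F}}^*\big)=\sum_j q_j\sum_k\rho_k|\widetilde{F}_{kj}|^2$ with $\widetilde{\mathbf{F}}=\mathbf{U_H^*FV_G}$ as in the statement; substituting this and $\text{tr}(\mathbf{W_r})=\sum_j q_j\le P_r$ into (5d)--(5e) and rearranging (with $\eta=1$) reproduces (6d) exactly. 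Collecting the two hops shows that (P) restricted to the structured triple is literally the scalar problem (P-eq).

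The step I expect to be the main obstacle is justifying that this structure is \emph{jointly} optimal, because $\mathbf{Q_r}$ enters both $R_{S-R}$ and the self-interference harvested energy, and $\mathbf{W_r}$ enters both $R_{R-D}$ and that same harvested-energy term; this is precisely the full-duplex coupling emphasized in the introduction. The Hadamard argument cleanly pins down $\mathbf{Q_r}=\mathbf{U_H^*}$ and the $\mathbf{V_H}$-structure of $\mathbf{W_s}$ (and one should verify, by a majorization / von Neumann trace argument, that the attainable diagonal values indeed sweep out all admissible $(p_i)$ within the power budget), but for $\mathbf{W_r}$ I would argue separately: rewrite (5d)--(5e) in the net-energy form $\text{tr}\big((\mathbf{I}-\eta\,\mathbf{F^*Q_r^*}\boldsymbol{\Lambda_\rho}\mathbf{Q_rF})\mathbf{W_r}\big)\le\eta\cdot(\text{source-harvested power})-P_{IC}$ and show that, starting from any optimal solution, $\mathbf{W_r}$ can be replaced by the $\mathbf{V_G}$-diagonal covariance that attains the same second-hop rate with minimal trace. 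The delicate point is that such a replacement can lower the harvested self-interference, so one must show any resulting tightening of the relay power budget is absorbed by adjusting $\boldsymbol{\rho}$ and $\boldsymbol{p}$ without pushing the end-to-end rate $\min\{R_{S-R},R_{R-D}\}$ below its optimum; I would establish this by a perturbation/exchange argument, using that the objective is nondecreasing in each per-hop rate and that at the optimum the binding hop determines the rate.
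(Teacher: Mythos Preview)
Your proposal is correct and follows essentially the same route as the paper: Hadamard's inequality plus a von Neumann--type trace bound to pin down $\mathbf{Q_r}=\mathbf{U_H^*}$ and $\mathbf{W_s}=\mathbf{V_H}\boldsymbol{\Lambda_s}\mathbf{V_H^*}$ for the first hop, then eigenbeamforming $\mathbf{W_r}=\mathbf{V_G}\boldsymbol{\Lambda_r}\mathbf{V_G^*}$ for the second hop, followed by the algebraic reduction of the self-interference trace via $\widetilde{\mathbf{F}}=\mathbf{U_H^*FV_G}$. The one place you go beyond the paper is the $\mathbf{W_r}$ step: the paper simply asserts that, because the non-uniform splitting vector $\boldsymbol{\rho}$ supplies extra degrees of freedom in the harvested-energy constraint, the relay transmit directions may be chosen to maximize $R_{R-D}$ ``without the need to consider'' the self-interference term, whereas you correctly flag this as the delicate joint-optimality issue and sketch a perturbation/exchange argument to close it; your instinct that this is where the rigor is thinnest is accurate, and your proposed fix is more careful than what the paper actually supplies.
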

\begin{proof} Here we provide the sketch of the proof, the full proof is given in Appendix~\ref{appendixA}.\\
It can be shown that the receiver combining matrix at the relay which simultaneously maximizes the first hop transmission rate and the relay's harvested energy from the source signal is given as $\boldsymbol{Q_r}^\star = \boldsymbol{U_H}^\ast$. The optimal source covariance matrix then has the form $\mathbf{W_s^\star = V_H \Lambda_s V_H^\ast}$, where $\mathbf{\Lambda_s} = \textbf{diag } (p_1 ... p_{K_1})$. For the second hop, the optimization is more complicated due to the coupling among the self-interference channel $\boldsymbol{F}$, relay transmit covariance $\boldsymbol{W_r}$, and the R-D channel $\boldsymbol{G}$ as seen in the expressions for the second-hop achievable rate and the harvested energy. Since the self-interference harvested energy expression has more degrees of freedom for maximization (namely both the power splitting factors and power allocation factors at the relay) than the second-hop rate (only relay power allocation factors), we choose the relay transmit beamforming directions to maximize the second-hop transmission rate. As such, for the R-D channel, $\mathbf{W_r^\star = V_G \Lambda_r V_G^\ast}$, with $\mathbf{\Lambda_r} = \textbf{diag }(q_1 ...q_{K_2})$. Here $K_1 \text{ and } K_2$ are the number of active channels corresponding to the non-zero singular values of the channel matrices $\bm H \text{ and } \bm G$, respectively. Using matrix and trace identities, and the optimal forms for $\boldsymbol{Q_r, W_s}$ and $\boldsymbol{W_r}$, the transmission power at the relay, $P_r$, in constraints (5d) and (5e) of problem (P) is then equivalent to $P_r = \text{tr} \left( \boldsymbol{\Lambda_{\rho} \Sigma_H \Lambda_s \Sigma_H^\ast}\right ) + \text{tr} \left(\boldsymbol{\Lambda_\rho \tilde{F} \Lambda_r \tilde{F}} \right)$, where $\boldsymbol{\tilde{F} \triangleq U_H^\ast F V_G}$, which can then be simplified to the expression in (6d). 
\end{proof}
In the above proof, we introduce the matrix $\tilde{\boldsymbol{F}} = \boldsymbol{U_H^\ast F V_G}$ which models the self-interference channel rotated by the receive beamforming matrix $\boldsymbol{U_H}$ and the transmit beamforming matrix $\boldsymbol{V_G}$ at the relay. This new matrix $\tilde{\boldsymbol{F}}$ can be viewed as the effective self-interference channel seen by the relay after transmit and receive beamforming. Since both beamforming matrices are unitary, representing analog beamforming, their rotations do not alter the total amount of energy present in the self-interference channel  but only direct the energy in beam directions towards and from the relay. The relay power splitting between harvesting and transmission can then be equivalently optimized given this rotated or effective self-interference channel.

The proposed forms for the transmit beamforming at the source and the receive and transmit beamforming at the relay imply that the maximum rate in the full-duplex two-hop MIMO channel with self-interference and energy harvesting at the relay is achieved through spatial multiplexing in each hop. Since the precoding directions are eigenvectors of the channel matrices, then with CSI at the transmitter and receivers, the only parameters to be optimized are the transmit power allocation in the two hops and the relay power splitting ratios. 

Based on Lemma~\ref{equality}, the corresponding transmission rates in the first and second hop, and the harvested power (relay transmission power) are equivalently written as
\begin{align*}
&R_1 = \sum_{i=1}^{K_1} \log_2 \Big (1+\frac{(1-\rho_i)p_i\lambda_{H,i}}{\sigma_p^2 + \sigma_f^2}\Big ), \\
&R_2 = \sum_{i=1}^{K_2} \log_2 \Big( 1+\frac{q_i\lambda_{G,i}}{\sigma_d^2} \Big )\\
&P_r = \sum_{i = 1}^{K_1} \lambda_{H,i} p_i \rho_i + \sum_{k=1}^{K_2} \sum_{j=1}^{K_2} \rho_k q_j \boldsymbol{\tilde{F}}_{kj} \boldsymbol{\tilde{F}}_{kj}^\ast - P_{IC} \tag{7}
\end{align*}

\begin{lemma}\label{convexity}
The optimization problem (P-eq) is jointly convex in the optimizing variables, $R, \boldsymbol{p, q , \rho}$. Thus strong duality holds for this problem.
\end{lemma}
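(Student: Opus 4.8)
The plan is to use the fact that the objective of (P-eq) is the scalar variable $R$, which is affine, so proving convexity of the program reduces to showing that its feasible set is convex; I would then verify this one constraint at a time, after a change of variables that neutralizes the bilinear products coupling the splitting ratios $\rho_i$ to the power allocations. Constraint (6c) and the implicit constraints $p_i\ge 0$, $0\le\rho_i\le 1$, $q_j\ge 0$ are affine, so there is nothing to do there. Constraint (6b) bounds $R$ above by $\sum_{j}\log_2(1+q_j\lambda_{G,j}/\sigma_d^2)$, a sum of logarithms of increasing affine functions of the $q_j$, which is concave in $\boldsymbol q$, so (6b) defines a convex set. The substance of the lemma is in (6a) and (6d).

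For (6a), the natural move is to reparametrize each source eigenmode by the power it devotes to information, $a_i := (1-\rho_i)p_i$, and the power it devotes to energy, $b_i := \rho_i p_i$, so that $p_i = a_i+b_i$, $\rho_i = b_i/(a_i+b_i)$, and the region $\{p_i\ge 0,\ 0\le\rho_i\le 1\}$ is exactly $\{a_i\ge 0,\ b_i\ge 0\}$. This substitution is one-to-one (up to the zero-power modes): it rewrites the right-hand side of (6a) as $\sum_i\log_2(1+a_i\lambda_{H,i}/(\sigma_p^2+\sigma_f^2))$, which is concave in $\boldsymbol a$; it leaves (6c) affine as $\sum_i(a_i+b_i)\le P_s$; and, usefully for the last constraint, it turns the source-harvested term $\sum_i\lambda_{H,i}p_i\rho_i$ appearing in (6d) into the affine expression $\sum_i\lambda_{H,i}b_i$. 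After this step, (6a), (6b), (6c), and the box constraints all carve out convex sets.

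The step I expect to be the main obstacle is the self-interference recycling term $\sum_{j,k}\rho_k q_j|\tilde{F}_{kj}|^2$ in (6d): it is bilinear in $(\boldsymbol\rho,\boldsymbol q)$, and under the substitution it becomes $\sum_k \frac{b_k}{a_k+b_k}\,\mu_k$ with $\mu_k:=\sum_j q_j|\tilde{F}_{kj}|^2$ affine in $\boldsymbol q$; since this term enters (6d) with a minus sign, (6d) defines a convex set only if $(a_k,b_k,\boldsymbol q)\mapsto b_k\,\mu_k(\boldsymbol q)/(a_k+b_k)$ is jointly concave on the nonnegative orthant. Establishing that concavity is the crux. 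I would attack it by an epigraph reformulation — introduce an auxiliary variable per receive beam for $\mu_k$ and for the ratio itself, and show the resulting description is the intersection of affine sets with a rotated second-order cone type constraint — or, if a fully general argument resists, by exploiting the model's structure (the $\rho_k$ lie in $[0,1]$ and the columns of $\tilde{F}$ are bounded, so $1-\sum_k\rho_k|\tilde{F}_{kj}|^2$ has a definite sign). Once every constraint is shown convex, (P-eq) is a convex program; and since a strictly feasible point exists whenever the source power is large enough for the relay to harvest strictly more than $P_{\text{IC}}$ — concentrate enough source power on a strong mode with $\rho_i$ away from $0$ and $1$, keep $\sum_i p_i < P_s$, and take the $q_j$ small — Slater's condition applies and yields the claimed strong duality.
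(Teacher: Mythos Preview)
Your route is different from the paper's. The paper does not change variables; it argues term by term that the bilinear pieces in (6a) and (6d) are quasiconcave on the nonnegative orthant (the Hessian of $(1-\rho_i)p_i$ is indefinite but its superlevel sets are convex for $p_i,\rho_i\ge 0$), and then asserts that the resulting constraint sets are convex and invokes Slater via scaled-identity covariances. Your substitution $a_i=(1-\rho_i)p_i$, $b_i=\rho_i p_i$ is cleaner for (6a) and for the source-harvested term on the right of (6d): it turns those pieces into genuinely concave or affine functions rather than merely quasiconcave ones, and so sidesteps the delicate point that a sum of quasiconcave functions need not be quasiconcave.

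The gap in your plan is exactly the step you flag as the crux. The joint concavity you propose to establish, $(a_k,b_k,\boldsymbol q)\mapsto b_k\,\mu_k(\boldsymbol q)/(a_k+b_k)$ on the nonnegative orthant, does not hold. Freeze $\mu_k$ to a positive constant and look at $b/(a+b)$: its Hessian has determinant $-(a+b)^{-4}<0$ and a strictly positive $(1,1)$ entry $2b/(a+b)^3$, so it is indefinite and the function is neither concave nor convex. Reintroducing the $\boldsymbol q$-direction cannot repair this, and an epigraph or rotated-cone device cannot manufacture concavity that is absent. In the original $(\boldsymbol\rho,\boldsymbol q)$ variables the obstruction is the same: each product $\rho_k q_j$ is individually quasiconcave on $\mathbb{R}_+^2$, but the sum $\sum_{j,k}\lvert\boldsymbol{\tilde F}_{kj}\rvert^2\rho_k q_j$ is not --- for instance the superlevel set $\{\rho_1 q_1+\rho_2 q_2\ge 1\}$ contains $(\rho_1,q_1,\rho_2,q_2)=(1,1,0,0)$ and $(0,0,1,1)$ but not their midpoint. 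The paper's Appendix~B walks the quasiconcavity path and stops at the term-by-term statement, so you will not find the missing step there; whatever argument closes this gap must treat the self-interference coupling in (6d) globally rather than one bilinear product at a time.
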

\begin{proof}
While it is straightforward to see that the objective function is affine and convex, constraint (6b) is convex and constraint (6c) is affine in each of the optimizing variables $q_i$ and $p_i$ respectively, however, joint convexity in the optimizing variables $(p_i, q_i, \rho_i)$ is required for the problem to be a convex optimization and needs to be established.  

To this end, we decompose the constraint functions in (6a) and (6d) into composition functions, and upon evaluating the Hessian and eigenvalues, show that the constraint functions for both these constraints are in fact quasiconvex for the domain of the optimization problem, with $p_i, q_i, \rho_i \geq 0$, where the sublevel sets of the constraint functions are convex. The problem then becomes maximization of a convex function, over convex sets and convex sub-level sets, and is therefore a convex optimization problem. Since there exists a strictly feasible point where the covariance matrices are scaled identity matrices, that is, $(\boldsymbol{W_s, W_r}) = (\gamma \boldsymbol{I}, \kappa \boldsymbol{I})$, with $\gamma$ and $\kappa$ as scaling coefficients, Slater's condition for constraint qualification is satisfied and strong duality holds. Details of the proof are included in Appendix~\ref{appendixB}. 
\end{proof}

Lemma~\ref{convexity} establishes that strong duality holds, which implies that the duality gap is zero. While generic convex optimization solvers~\cite{Boyd2004} such as~\cite{cvx} can then be used to solve this problem, they may not be efficient and will not reveal problem structure or additional insights into the optimal results. Next, we formulate a Lagrange Dual problem in order to analytically characterize the form of the optimal primal solution.

\subsection{Dual Problem Formulation}
The Lagrangian for (P-eq) is as given below
\begin{align*}
\mathcal{L}(R,\boldsymbol{\rho},\bm p,\bm q,\alpha,\beta,\nu,\mu) &= R - \alpha(R-R_1) - \beta(R-R_2) \\
&- \nu\big (\sum_{i=1}^{K_1} p_i - P_s\big ) - \mu\big (\sum_{j=1}^{K_2} q_j - P_r \big ),
\end{align*}
where $\alpha, \beta, \nu$ and $\mu$ are the dual variables corresponding to the constraints (6a) - (6d) respectively in (P-eq). Since strong duality holds, the primal and dual variables can be solved for as a primal-dual pair, $(p_i^\star, q_i^\star, \rho_i^\star,\alpha^\star, \beta^\star, \nu^\star, \mu^\star)$. Since $\mathcal{L}$ is a linear function of $R$ and hence differentiable with respect to $R$, we use the optimality condition and set $\nabla_R \mathcal{L} = 0$, which gives us $1-\alpha-\beta = 0 \implies \beta = 1 - \alpha$. This is also intuitive, since at any point, the transmission rate, $R$, is equal to the minimum of the rate of the two hops, which is equal to either $R_1$ or $R_2$ or is equal to both when $R^\star = R_1 = R_2$. Substituting $\beta = 1-\alpha$, we can rewrite the Lagrangian as
\begin{align*}
&\mathcal{L}(\boldsymbol{\rho},\bm p,\bm q,\alpha,\nu,\mu) = \alpha R_1 + (1-\alpha)R_2 - \nu\left (\sum_{i=1}^{K_1} p_i - P_s\right ) \\
&-  \mu\left (\sum_{j=1}^{K_2} q_j - \sum_{k=1}^{K_2} \sum_{j=1}^{K_2} \rho_k q_j \boldsymbol{\tilde{F}}_{kj} \boldsymbol{\tilde{F}}_{kj}^\ast - \sum_{i = 1}^{K_1} \lambda_{H,i} p_i \rho_i - P_{\text{IC}}\right )
\end{align*}

We now define the Dual problem as follows, such that minimization of the dual objective function is equivalent to the maximization of the primal objective function.
\begin{align*}
\text{P-Dual} &: \min_{\alpha, \nu, \mu} g(\alpha, \nu, \mu)\\
&\text{where:}\ g(\alpha, \nu, \mu) = \max_{R, \boldsymbol{\rho, p, q}} \mathcal{L} (R,\boldsymbol{\rho, p, q}, \alpha, \nu, \mu) \tag{8}
\end{align*}
In Section IV next, we solve for the optimal primal variables which maximize the Lagrangian to obtain the dual function. This then provides a basis for a  Primal-Dual algorithm to solve the optimization problem (P-eq) which we will discuss in Section V. 

\section{Optimal Primal Solution} 
In establishing the dual function, we solve in closed-form for the optimal primal variables in terms of the dual variables. We find closed form expressions of the primal variables for problem (P-eq) which is formulated for the case of non-uniform power splitting scheme. As an extension, we also solve for the primal variables for the case of uniform power splitting scheme which will be used for later performance comparison.

\subsection{Non-Uniform Power Splitting}
\begin{theorem}
The optimal power allocation, $\bm p$ and $\bm q$, in the first and second hop respectively, and the optimal power splitting ratio, $\boldsymbol{\rho}$, can be obtained in terms of the dual variables as
\begin{align*}
&p_i^\star = \Bigg ( \frac{\alpha}{\nu - \mu \rho_i \lambda_{H,i}} - \frac{\sigma_p^2 + \sigma_f^2}{\lambda_{H,i} (1 - \rho_i)} \Bigg )^+ \tag{9a}\\
&q_j^\star = \Bigg ( \frac{1-\alpha}{\mu (1 - \sum_{k=1}^{K_2} \rho_k \boldsymbol{\tilde{F}}_{kj} \boldsymbol{\tilde{F}}_{kj}^\ast )} - \frac{\sigma_d^2}{\lambda_{G,j}} \Bigg )^+ \tag{9b}\\
&\rho_k^\star = \Bigg ( 1 + \frac{\sigma_p^2 + \sigma_f^2}{\lambda_{H,k} p_k}- \frac{\alpha}{\mu ( \lambda_{H,k} p_k + \sum_{j=1}^{K_2} q_j \boldsymbol{\tilde{F}}_{kj} \boldsymbol{\tilde{F}}_{kj}^\ast )} \Bigg )_0^1 \tag{10}
\end{align*}
\end{theorem}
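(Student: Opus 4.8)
\noindent\emph{Proof strategy.} The plan is to read off the three formulas from the first–order (KKT) conditions of the inner maximization $\max_{R,\boldsymbol{\rho},\bm p,\bm q}\mathcal L$ that defines the dual function $g(\alpha,\nu,\mu)$ in (8). By Lemma~\ref{convexity}, (P-eq) is convex and satisfies Slater's condition, so strong duality holds and the KKT conditions are necessary and sufficient for optimality; in particular the optimal primal point must maximize $\mathcal L(\cdot,\alpha^\star,\nu^\star,\mu^\star)$, so it suffices to characterize that maximizer. Stationarity in $R$ has already been used in the excerpt to eliminate $\beta=1-\alpha$, after which $R$ drops out of $\mathcal L$ and only $\bm p,\bm q,\boldsymbol{\rho}$ remain.

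The key structural fact is that, with $\alpha,\nu,\mu$ fixed, $\mathcal L$ decomposes into per-index pieces, and in each single variable (with the others held fixed) it is concave. Isolating the $p_i$-dependent part of $\mathcal L$ gives $\alpha\log_2\!\big(1+\tfrac{(1-\rho_i)\lambda_{H,i}}{\sigma_p^2+\sigma_f^2}\,p_i\big)+(\mu\lambda_{H,i}\rho_i-\nu)\,p_i$, i.e. a logarithm of an increasing affine function (scaled by $\alpha\ge 0$) plus an affine term — hence concave in $p_i$; similarly the $q_j$-part is $(1-\alpha)\log_2\!\big(1+\tfrac{\lambda_{G,j}}{\sigma_d^2}\,q_j\big)-\mu\big(1-\sum_k\rho_k\boldsymbol{\tilde F}_{kj}\boldsymbol{\tilde F}_{kj}^\ast\big)q_j$, and the $\rho_k$-part is $\alpha\log_2\!\big(1+\tfrac{p_k\lambda_{H,k}}{\sigma_p^2+\sigma_f^2}(1-\rho_k)\big)+\mu\big(\lambda_{H,k}p_k+\sum_j q_j\boldsymbol{\tilde F}_{kj}\boldsymbol{\tilde F}_{kj}^\ast\big)\rho_k$, a logarithm of a \emph{decreasing} affine function of $\rho_k$ plus an affine term — still concave. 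For each variable I would set the partial derivative to zero, solve the resulting scalar equation, and then, since the sub-function is concave on its feasible interval, project the unconstrained stationary point onto $[0,\infty)$ for $p_i$ and $q_j$ (this produces the $(\cdot)^+$ operators in (9a)–(9b)) and onto $[0,1]$ for $\rho_k$ (producing the $(\cdot)_0^1$ operator in (10)). Differentiating the $p_i$-part, equating to $\nu-\mu\lambda_{H,i}\rho_i$ and clearing the denominator yields (9a); the identical computation for $q_j$ gives (9b) and for $\rho_k$ gives (10), up to the conventional $\ln 2$ factor, which is harmless once the rates are expressed in nats (a constant rescaling of $R$ that does not affect the structure of the optimizer). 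I would also record the conditions under which these stationary points are the genuine maximizers — $\nu-\mu\rho_i\lambda_{H,i}>0$, $1-\sum_k\rho_k\boldsymbol{\tilde F}_{kj}\boldsymbol{\tilde F}_{kj}^\ast>0$, and $\mu>0$ — noting that the first two must hold at the optimal duals (otherwise the corresponding sub-function is unbounded above and $g=+\infty$), the second also reflecting that the per-beam recycled self-interference fraction cannot exceed one, and $\mu>0$ corresponding to the harvested-power balance (6d) being active at the optimum.

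I expect the principal difficulty to be bookkeeping around the coupling rather than anything conceptually deep: $\rho_k$ enters $R_1$ through its own $i=k$ term, enters the harvested-power balance through both $\sum_i\lambda_{H,i}p_i\rho_i$ and $\sum_{k',j}\rho_{k'}q_j\boldsymbol{\tilde F}_{k'j}\boldsymbol{\tilde F}_{k'j}^\ast$, so one must carefully collect exactly the $\rho_k$-dependent pieces before differentiating. A consequence worth stating explicitly is that (9a), (9b) and (10) are mutually coupled — $p_i^\star$ depends on $\rho_i^\star$, $\rho_k^\star$ depends on $p_k^\star$ and all $q_j^\star$, and so on — so "closed form in the dual variables" here means a fixed-point system parametrised by $(\alpha,\nu,\mu)$, which is precisely the object the primal–dual algorithm of Section V will iterate.
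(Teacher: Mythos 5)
Your proposal is correct and follows essentially the same route as the paper: set $\nabla_{p_i}\mathcal{L}=0$, $\nabla_{q_j}\mathcal{L}=0$, $\nabla_{\rho_k}\mathcal{L}=0$ in the Lagrangian (with $\beta=1-\alpha$ already substituted), solve each scalar stationarity equation, and project onto the feasible interval to obtain (9a), (9b) and (10). Your added remarks — per-variable concavity justifying the projections, the absorbed $\ln 2$ factor, the positivity conditions on the denominators, and the observation that the three formulas form a coupled fixed-point system in $(\alpha,\nu,\mu)$ — are all consistent with, and slightly more careful than, the paper's Appendix C.
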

\begin{proof}
Obtained through KKT conditions by setting $\nabla_{p_i} \mathcal{L} = 0$, $\nabla_{q_i} \mathcal{L} = 0$ and $\nabla \mathcal{L}_{\rho_i} = 0$, respectively. Details are included in Appendix~\ref{appendixC}. Here $(x)^+ = \max(x,0)$, and $(x)_0^1 = \max (\min (x,1),0)$ to ensure implicit constraints $p_i, q_i \geq 0$ and $0 \leq \rho_i \leq 1$ respectively. 
\end{proof}

Observing the expressions for power allocation in both hops, $p_i$ in (9a) and $q_i$ in (9b), the optimal power levels are varied according to the channel eigenmodes and power splitting ratios in a water-filling fashion where multiple and varying water "levels" depend on the dual variables and related channel eigenvalues. The power splitting ratios are also varied according to the S-R and self-interference channel eigenvalues to achieve a tradeoff between information decoding and energy harvesting. 

For robust implementation in our algorithm later, it is useful to express $p_i$ in a standard waterfilling form as
\setcounter{equation}{10}
\begin{equation}
p_i^\star = \frac{\alpha}{\nu - \mu \lambda_{H,i} \rho_i} \underbrace{\Bigg(1 - \frac{(\nu - \mu \rho_i \lambda_{H,i})(\sigma_p^2 + \sigma_f^2)}{\alpha (1 - \rho_i) \lambda_{H,i}} \Bigg )^+}_{\tilde{p}_i}
\end{equation}
In this form, the expression for $\tilde{p}_i$ in the bracket is standard water-filling with a constant "water level", and $p_i^\star$ is then obtained from $\tilde{p}_i$ via an index-specific scaling factor. From (11) we can see that, for bounded power allocation $p_i^\star$, we require $\nu > \mu \lambda_{H,i} \rho_i$, and $\rho_i < 1 \ \forall i$. To perform waterfilling for allocating $\tilde{p}_i$, we use a dual-variable dependent sum power constraint as $\sum_{i=1}^{N_1} \tilde{p}_i = \frac{1}{\alpha} (\nu P_s - \mu \sum_{i=1}^{N_1} p_i \lambda_{H,i} \rho_i)$ in which the term $\sum_{i=1}^{N_1} p_i \lambda_{H,i} \rho_i$ represents the amount of power harvested from the source signal only. Similarly for $q_i$, we have
\begin{equation}
\medmath{q_i^\star = \frac{1 - \alpha}{\mu (1 - \sum_{k=1}^{K_2} \rho_k \boldsymbol{\tilde{F}}_{kj} \boldsymbol{\tilde{F}}_{kj}^\ast )}  \underbrace{\Bigg(1 - \frac{\mu \sigma_d^2 (1 - \sum_{k=1}^{K_2} \rho_k \boldsymbol{\tilde{F}}_{kj} \boldsymbol{\tilde{F}}_{kj}^\ast )} {(1 - \alpha) \lambda_{G,i}} \Bigg )^+}_{\tilde{q}_i}}
\end{equation}
Here $\tilde{q}_i$ is obtained through conventional waterfilling where the sum power constraint for $\tilde{q}_i$ can be obtained directly from (6d) in (P-eq) as $\sum_{i=1}^{K_2} \tilde{q}_i = \frac{\mu }{1 - \alpha} \Big [\sum_{i = 1}^{K_1} \lambda_{H,i} p_i \rho_i - P_{\text{IC}} \Big ]$.

\subsection{Uniform Power Splitting}
The expression for $\rho_i^\star$ in (10) is index-dependent on the specific eigenvalues of the S-R and self-interference channels, therefore the optimal PS ratio is non-uniform. Uniform power splitting, however, has often been adopted in literature to simplify the development of power splitting algorithms. Therefore, for comparison, we also solve for the optimal primal solutions with uniform power splitting ratio, which is constant across all antennas at the relay, that is, $\rho_i = \rho \ \forall i \in [1,N_r]$.

\begin{theorem}
The optimal power allocation, $\bm p$ and $\bm q$, in the first and second hop respectively, and the uniform power splitting ratio, $\rho^\star$, can be obtained in terms of the dual variables as
 \begin{align*}
 &p_i^\star = \Bigg ( \frac{\alpha}{\nu - \mu \rho \lambda_{H,i}} - \frac{\sigma_p^2 + \sigma_f^2}{(1-\rho)\lambda_{H,i}} \Bigg )^+ \\
 &q_j^\star = \Bigg ( \frac{1-\alpha}{\mu (1 - \sum_{k=1}^{K_2} \rho \boldsymbol{\tilde{F}}_{kj} \boldsymbol{\tilde{F}}_{kj}^\ast )} - \frac{\sigma_d^2}{\lambda_{G,j}} \Bigg )^+
 \end{align*}
The optimal PS ratio, $\rho^\star$ is obtained by solving the equation of the form $f(\rho) = c$ where $f(\rho)$ is a scalar function of $\rho$ and $c$ is a constant, and is given as
\begin{align}
&\medmath{\sum_{i=1}^{K_1} \frac{\alpha p_i \lambda_{H,i}}{\sigma_p^2+ \sigma_f^2 + (1-\rho) p_i \lambda_{H,i}} =  \mu \Big ( \sum_{j=1}^{K_2} \sum_{k=1}^{K_2} q_j \boldsymbol{\tilde{F}}_{kj} \boldsymbol{\tilde{F}}_{kj}^\ast + \sum_{i = 1}^{K_1} \lambda_{H,i} p_i \Big )}
\end{align}
\end{theorem}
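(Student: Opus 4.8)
The plan is to run the same Lagrangian/KKT machinery used for Theorem~1, but on the uniform-splitting restriction of (P-eq) --- the problem obtained by adding the linear constraints $\rho_1=\dots=\rho_{K_1}=\rho$ and treating $\rho$ as a single scalar. First I would observe that this restriction does not change the structure of the problem: intersecting the feasible set of (P-eq) with the affine subspace $\{\rho_1=\dots=\rho_{K_1}\}$ keeps the problem convex, so Lemma~\ref{convexity} still applies; the point $(\boldsymbol{W_s},\boldsymbol{W_r})=(\gamma\boldsymbol{I},\kappa\boldsymbol{I})$ with $\rho\in(0,1)$ is still strictly feasible, so Slater's condition and hence strong duality still hold, and the KKT conditions remain necessary and sufficient. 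I would also note that the reduction of the matrix problem (P) to this scalar problem is inherited from Lemma~\ref{equality} and in fact becomes immediate when $\boldsymbol{\Lambda_{\rho}}=\rho\boldsymbol{I}$: since $\rho\boldsymbol{I}$ commutes with every matrix and $\boldsymbol{Q_r}$ is unitary, $\boldsymbol{Q_r}$ drops out of both $R_{S-R}$ and the harvested-power expression, and the optimal covariances are still $\boldsymbol{W_s^\star}=\boldsymbol{V_H\Lambda_s V_H^\ast}$, $\boldsymbol{W_r^\star}=\boldsymbol{V_G\Lambda_r V_G^\ast}$, yielding $R_1,R_2,P_r$ of~(7) with $\rho_i\equiv\rho$.

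Next I would write the Lagrangian exactly as in Section~III-B but with every $\rho_i$ and $\rho_k$ replaced by the scalar $\rho$. Stationarity in $R$ again gives $1-\alpha-\beta=0$, i.e.\ $\beta=1-\alpha$. The conditions $\nabla_{p_i}\mathcal{L}=0$ and $\nabla_{q_j}\mathcal{L}=0$ then reduce, after this substitution, to exactly the equations solved in the proof of Theorem~1 (Appendix~\ref{appendixC}) with $\rho_i\mapsto\rho$ and $\rho_k\mapsto\rho$; solving them and enforcing $p_i\ge 0$, $q_j\ge 0$ by complementary slackness with the sign constraints produces the stated waterfilling forms for $p_i^\star$ and $q_j^\star$ with no additional work.

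The genuinely new step is the stationarity condition in the single scalar $\rho$, $\nabla_\rho\mathcal{L}=0$. I would differentiate $\mathcal{L}$ term by term: $\alpha R_1$ contributes $-\alpha\sum_{i=1}^{K_1}\frac{p_i\lambda_{H,i}}{\sigma_p^2+\sigma_f^2+(1-\rho)p_i\lambda_{H,i}}$ (the $\ln 2$ from $\log_2$ is immaterial and is suppressed, as in the paper's convention); $(1-\alpha)R_2$ and the $\nu$-term contribute nothing, since neither contains $\rho$ as an independent variable; and the $\mu$-term contributes $\mu\big(\sum_{j=1}^{K_2}\sum_{k=1}^{K_2}q_j\boldsymbol{\tilde{F}}_{kj}\boldsymbol{\tilde{F}}_{kj}^\ast+\sum_{i=1}^{K_1}\lambda_{H,i}p_i\big)$, coming from $\frac{\partial}{\partial\rho}\big[\rho\big(\sum_k\sum_j q_j\boldsymbol{\tilde{F}}_{kj}\boldsymbol{\tilde{F}}_{kj}^\ast+\sum_i\lambda_{H,i}p_i\big)\big]$. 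Equating the sum to zero gives precisely~(13). I would then note that, holding $\boldsymbol{p},\boldsymbol{q}$ fixed (as within one primal update of the primal--dual iteration), the left side of~(13) is a scalar function $f(\rho)$ that is continuous and strictly increasing on $(0,1)$ --- each summand's denominator decreases in $\rho$ --- while the right side is a nonnegative constant $c$; hence $f(\rho)=c$ has a unique root, which is interior whenever $c\in\big(f(0),f(1)\big)$ and is pinned to an endpoint otherwise by complementary slackness with $0\le\rho\le 1$. Since $\rho=0$ makes~(6d) infeasible (it forces $\sum_j q_j\le -P_{IC}$) and $\rho=1$ kills $R_1$, the operative optimum is interior and is characterized exactly by~(13).

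\textbf{Main obstacle.} I expect nothing deep here; the only delicate point is the bookkeeping in the scalar $\rho$-stationarity step --- one must collect every appearance of $\rho$ in $\mathcal{L}$ (inside $R_1$, and \emph{twice} in the harvested-power constraint, multiplying both the source term $\sum_i\lambda_{H,i}p_i$ and the self-interference term $\sum_k\sum_j q_j\boldsymbol{\tilde{F}}_{kj}\boldsymbol{\tilde{F}}_{kj}^\ast$) and then recognize that the resulting transcendental relation collapses to the clean monotone form $f(\rho)=c$, together with the handling of the boundary cases. Everything else (convexity, strong duality, the $p_i,q_j$ waterfilling expressions) is inherited directly from Lemmas~\ref{equality}--\ref{convexity} and Theorem~1.
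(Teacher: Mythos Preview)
Your proposal is correct and follows essentially the same approach as the paper: specialize the Lagrangian of (P-eq) to the scalar $\rho$, invoke the KKT conditions, read off $p_i^\star$ and $q_j^\star$ by the identical computation as in Theorem~1 with $\rho_i\mapsto\rho$, and obtain~(13) from $\nabla_\rho\mathcal{L}=0$. Your write-up is in fact more detailed than the paper's own proof, which simply states ``obtained through KKT conditions similar to the non-uniform power splitting case, but by setting $\nabla_{\rho}\mathcal{L}=0$, with scalar $\rho$''; the monotonicity/bisection remark you include is the content of the paper's separate Lemma~3 rather than of Theorem~2 itself.
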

\begin{proof}
Obtained through KKT conditions similar to the the non-uniform power splitting case, but by setting $\nabla_{\rho} \mathcal{L} = 0$, with scalar $\rho$ in this case . Details for power allocation proof are omitted to avoid redundancy. 
\end{proof}
\begin{lemma}
Solving for the optimal $\rho$ from (13) can be done efficiently via a bisection algorithm.
\end{lemma}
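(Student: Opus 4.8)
The plan is to show that equation~(13), written in the form $f(\rho)=c$, has a left-hand side $f$ that is continuous and strictly monotone on the feasible interval, while the right-hand side is a constant; monotonicity then guarantees a unique crossing point (or a boundary optimum) that a bisection search locates to arbitrary precision with a guaranteed, geometric convergence rate.

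First I would fix the remaining primal variables: inside the primal--dual iteration the power-splitting update is performed with the power allocations $\bm p$ and $\bm q$ (and the dual variables $\alpha,\mu$) held at their current values, so the right-hand side $c = \mu\big(\sum_{j=1}^{K_2}\sum_{k=1}^{K_2} q_j \tilde F_{kj}\tilde F_{kj}^\ast + \sum_{i=1}^{K_1}\lambda_{H,i}p_i\big)$ is a nonnegative constant independent of $\rho$ (recall $\mu\ge 0$ since it is the multiplier of an inequality constraint, and $\alpha\in[0,1]$ since $\alpha,\beta=1-\alpha$ are both such multipliers). Then I would analyze $f(\rho)=\sum_{i=1}^{K_1}\frac{\alpha p_i\lambda_{H,i}}{\sigma_p^2+\sigma_f^2+(1-\rho)p_i\lambda_{H,i}}$: differentiating term by term gives $f'(\rho)=\sum_{i=1}^{K_1}\frac{\alpha p_i^2\lambda_{H,i}^2}{\big(\sigma_p^2+\sigma_f^2+(1-\rho)p_i\lambda_{H,i}\big)^2}>0$ whenever $\alpha>0$ and at least one $p_i>0$, so $f$ is strictly increasing and continuous on $[0,1]$, with $f(0)=\sum_i\frac{\alpha p_i\lambda_{H,i}}{\sigma_p^2+\sigma_f^2+p_i\lambda_{H,i}}$ and $f(1)=\frac{\alpha\sum_i p_i\lambda_{H,i}}{\sigma_p^2+\sigma_f^2}$ both finite.

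Next I would invoke the intermediate value theorem: if $c\in[f(0),f(1)]$ there is a unique root $\rho^\star\in[0,1]$, and a standard bisection on $[0,1]$ — repeatedly halving the bracket and testing the sign of $f(\rho)-c$ — converges to $\rho^\star$ at a linear rate (the bracket length halves each step), which is the efficiency claim. If instead $c<f(0)$ the stationarity condition cannot be met in the interior and $\rho^\star=0$; if $c>f(1)$ the optimum is pushed toward $\rho\to1$, which is precluded by the requirement of bounded relay transmit power, so in practice $c$ lies in the admissible range. Either way, monotonicity of $f$ means the search is well-defined and robust, and one can also detect the boundary cases by simply evaluating $f(0)$ and $f(1)$ before starting the bisection.

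The main obstacle I anticipate is the implicit coupling between $\rho$ and the power allocations: the $p_i,q_j$ of Theorem~3 themselves depend on $\rho$ (through terms such as $\frac{\alpha}{\nu-\mu\rho\lambda_{H,i}}$ and $\frac{\sigma_p^2+\sigma_f^2}{(1-\rho)\lambda_{H,i}}$) and carry $(\cdot)^+$ truncations, so the clean ``$f$ monotone, $c$ constant'' picture holds literally only when $\bm p,\bm q$ are frozen. I would handle this by appealing to the block-coordinate structure of the primal--dual algorithm, in which $\rho$ is updated with $\bm p,\bm q$ fixed; and, if a self-contained monotonicity statement is desired, by noting that the $(\cdot)^+$ operators partition $[0,1]$ into finitely many subintervals on each of which the relevant quantities are smooth, so that the monotone analysis above applies piecewise and the bisection is simply run on the active piece. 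The remaining details (endpoint evaluation, the IVT argument, and the convergence rate of bisection) are routine.
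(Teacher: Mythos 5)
Your proposal is correct and follows essentially the same route as the paper, which simply observes that $f(\rho)$ is monotone in $\rho$ (being decreasing in $(1-\rho)$) and then brackets the root of $g(\rho)=f(\rho)-c$ on an interval where $g$ changes sign. Your explicit derivative computation $f'(\rho)=\sum_i \alpha p_i^2\lambda_{H,i}^2/\bigl(\sigma_p^2+\sigma_f^2+(1-\rho)p_i\lambda_{H,i}\bigr)^2>0$, the endpoint/IVT discussion, and the remark that $\bm p,\bm q$ are held fixed during the $\rho$-update in the block-coordinate primal--dual iteration make rigorous what the paper states in one line.
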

\begin{proof}
For $\rho^\star$, the function $f(\rho)$ is decreasing in $(1- \rho)$ and increasing in $\rho$, and is therefore monotonous in $\rho$, thus the solution for $\rho^\star$ can be obtained using a bisection algorithm.
\end{proof}
The algorithm finds the optimal $\rho^\star$ by searching for the root of the equation $g(\rho)= f(\rho) - c = 0$ in the interval (0,1) by repeatedly bisecting the interval for which the values of $g(\rho)$ have opposite signs at the two end points. That is, the interval $(l,u)$ brackets the root, if $g(l)g(u) < 0$.

\section{A Primal-Dual Algorithm}
\begin{algorithm}[t]
\caption{Solution for Rate Optimization Problem}
\textbf{Given:} Distance values $d_{sr}, d_{rd}, d_{sd}$. Channel Matrices $\boldsymbol{H, G, F}$. Precision $\epsilon_0$ and boundary value $\epsilon > 0$\\
\textbf{Initialize:} Dual variables $\alpha, \nu, \mu$. Primal variables $\boldsymbol{p, \rho}$. Ellipsoid shape matrix $P$.\\
\textbf{Begin Algorithm}
\begin{itemize}[leftmargin=*]
\item \textit{Calculate PS Ratio - Non-uniform Power Splitting}\\
Use the closed form expression in (10) to find $\rho_i^\star$. If $\rho_i^\star$ $\notin$ (0,1), use boundary conditions as follows, \\
$\circ$ If $\rho_i < 0 \implies \text{set } \rho_i^\star = \epsilon$ \\
$\circ$ Elseif $\rho_i > 1 \implies \text{set } \rho_i^\star = 1 - \epsilon$
\item \textit{Calculate PS Ratio - Uniform Power Splitting}\\
Use the bisection method (Lemma 2) to solve $f(\rho) = c$. If $\rho^\star$ $\notin$ (0,1), use the following boundary conditions: \\
$\circ$ If $R(\epsilon) \geq R(1 - \epsilon)$ $\implies \rho^\star = \epsilon$\\
$\circ$ Else $\rho^\star = 1 - \epsilon$
\item \textit{Calculate power allocation in first and second hop}\\
$\circ$ Using (11) calculate $P_{h1} = \sum_{i=1}^{N_1} p_i \lambda_{H,i} \rho_i$ and waterfill to find $\tilde{p}_i$. Find $p_i^\star$ from $\tilde{p}_i$\\
$\circ$ Based on (12) find $\tilde{q}_i$ using waterfilling. Obtain $q_i^\star$ from $\tilde{q}_i$. 
\item \textit{Check dual function value}
\begin{align*}
g(\alpha, \nu, \mu) &= \alpha R_1 + (1 - \alpha) R_2 - \nu (\sum_{i=1}^{N_1} p_i^\star - P_s) \\
&-\mu (\sum_{i=1}^{N_2} q_i^\star - P_h)
\end{align*}
$\circ$ If dual variables converge with required precision, \textbf{stop}\\
$\circ$ Else, find subgradients in (14), update dual-variables using ellipsoid method, \textbf{continue}
\end{itemize}
\textbf{End Algorithm}
\end{algorithm}
Theorems 1 and 2 show in closed form how the optimal primal variables are functions of the dual variables, but also show that they are interdependent on other primal variables. Thus in order to reach the optimal solutions, we need to to solve for the dual variables then the primal variables in an iterative fashion. To achieve this, we design a primal-dual algorithm which iteratively updates the dual and the primal variables until reaching a target accuracy.

To update the dual variables, we use the optimal primal solutions stated in Theorems 1 and 2 to obtain the dual function, $g(\alpha, \nu, \mu)$, as given in (8). The problem then becomes minimizing this dual function in terms of the dual variables. We can use a sub-gradient based method to solve for the dual minimization problem. The sub-gradient terms for the dual variables $(\alpha, \nu, \mu)$ are as given below
\begin{align*}
&\Delta \alpha = \sum_{i=1}^{K_1} \log_2 \left (1+\frac{(1-\rho_i)p_i\lambda_{H,i}}{\sigma_p^2 + \sigma_f^2}\right ) - \sum_{i=1}^{K_2} \log_2 \left( 1+\frac{q_i\lambda_{G,i}}{\sigma_d^2} \right ) \\
&\Delta \nu = \sum_{i=1}^{K_1} p_i - P_s\\
&\Delta \mu = \sum_{j=1}^{K_2} q_j \left( 1 - \sum_{k=1}^{K_2} \rho_k \boldsymbol{\tilde{F}}_{kj} \boldsymbol{\tilde{F}}_{kj}^\ast \right )- \sum_{i = 1}^{N_1} \lambda_{H,i} p_i \rho_i \tag{14}
\end{align*}
We then design a primal-dual algorithm which iteratively updates the dual and primal variables. The dual variables updates are based on the shallow-cut ellipsoid method using the sub-gradient expressions in (14), and the primal variables updates are based on results in Theorems 1 or 2 (depending on the case of non-uniform or uniform power splitting). 

The proposed primal-dual algorithm works as follows. First we need to initialize the variables. From the expressions for $p_i^\star$, $q_i^\star$ and $\rho_i^\star$ in (9) and (10), we see that they are interdependent. Therefore we initialize the algorithm with $\alpha_0 \geq 0, \nu_0 \geq 0, \mu_0 \geq 0, p_{i,0} \geq 0, q_{i,0} \geq 0$, and $0 \leq \rho_{i,0} \leq 1$. Next, we use the expressions in Theorem 1 and 2, for the non-uniform and uniform power splitting scheme respectively, to obtain values for the primal variables based on the current values for the dual variables. We use the modified waterfilling expressions in (11) and (12) to update the values for power allocation in the first and second hop. For the power splitting ratio, we use the closed form equation in (10) for non-uniform power splitting, and the bisection algorithm in Lemma 2 for the uniform power splitting scheme, respectively. 
\begin{figure}[t]
\centering
\includegraphics[scale = 0.5]{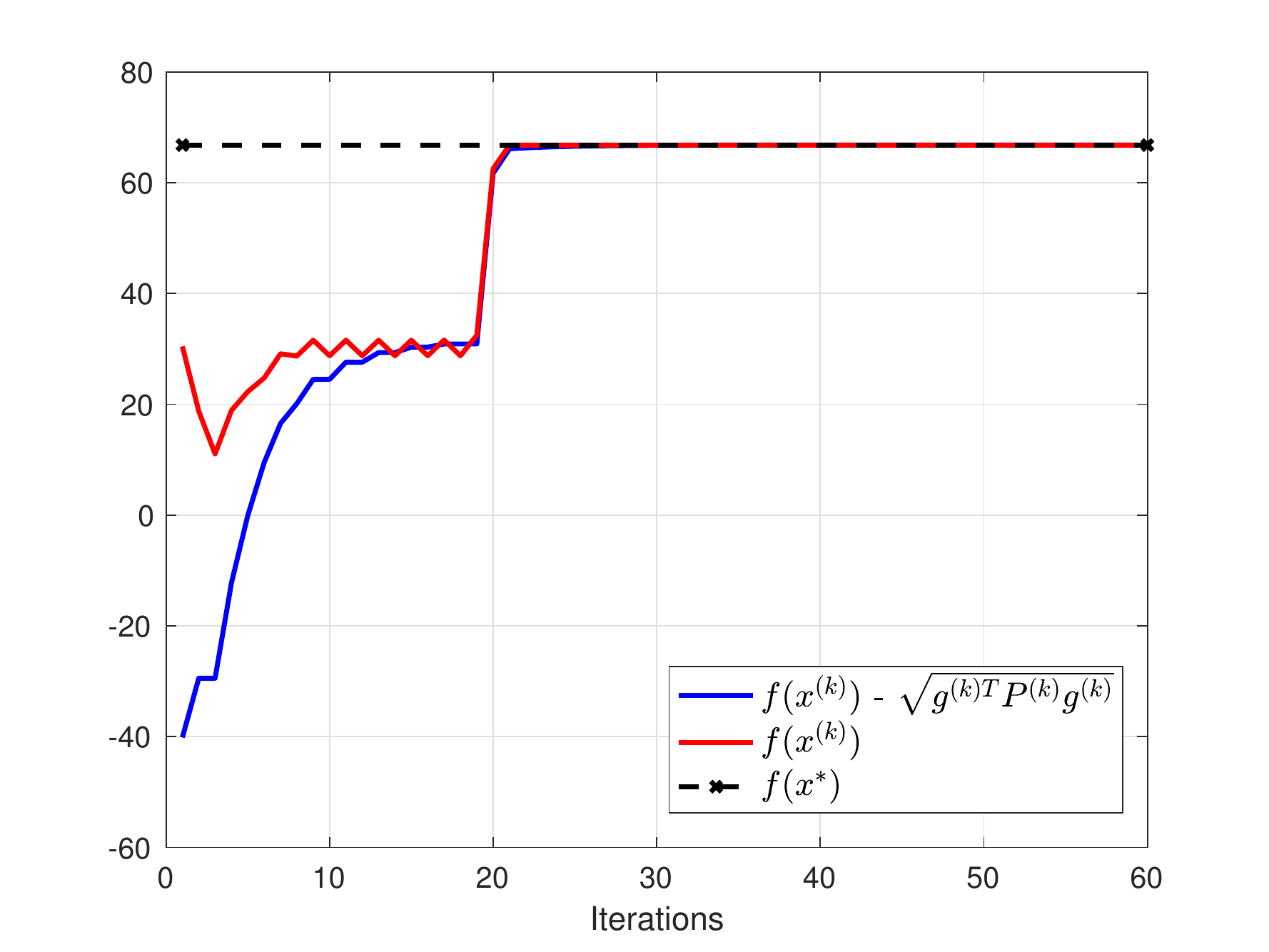}
\caption{A typical example of convergence to the optimal solution, $f(x^\star)$, with $x^\star = \{\alpha^\star, \nu^\star, \mu^\star \}$}
\label{convergence}
\end{figure}
After computing the primal variables, the dual variables are updated based on their respective sub-gradients given in (14) according to the shallow-cut ellipsoid algorithm~\cite{Boyd2004}. The sub-gradient in the ellipsoid algorithm is calculated at the ellipsoid center, $x = (\alpha, \nu, \mu)$, to reach the minimum volume ellipsoid containing the minimizing point for the dual-function $g(\alpha, \nu, \mu)$. For each iteration, the primal variables are updated according to the expressions in Theorem 1 and 2, and a value for $g(\alpha, \nu, \mu)$ is calculated. Since the ellipsoid algorithm is not a descent method, we keep track of the best point for $g(\alpha, \nu, \mu)$ at each update, however it uses modest storage and computation per step, $O(n^2)$, where $n$ is the number of variables. 

The algorithm thus iteratively updates the primal variables using closed-form equations with current values of the dual variables, then updates the dual variables using a sub-gradient method via the ellipsoid algorithm. These primal-dual update steps are repeated until the desired level of precision is reached for the stopping criterion, which is the minimum volume of the ellipsoid in our algorithm. As the optimal values for the dual variables are reached using the sub-gradient algorithm, the values for the primal variables also converge to their respective optimal values by strong duality (see Lemma~\ref{convexity}). Detailed steps are described in Algorithm 1. Figure~\ref{convergence} shows the convergence of the proposed primal-dual algorithm to the optimal solution.

\section{Channel State Information at receiver(s) only}
Our formulation, analysis and algorithms in previous sections have assumed the availability of local Channel State Information (CSI) at all nodes, including transmitting and receiving nodes. Obtaining CSI at transmitters, while possible in a time division duplex (TDD) system, can sometime be challenging in a wireless environment with fast fading or in frequency division duplex (FDD) systems~\cite{Tse2005}. We therefore extend the problem to the case where we have instantaneous CSI at the receiver (CSIR) nodes only. Without channel knowledge at the transmitter, it is optimal to allocate equal transmit power at the source and relay, such that $p_i = p = P_s/N_s \ \forall i$, and $q_i = q = P_r/N_r \ \forall i$~\cite{Goldsmith2007}. The optimization problem for the CSIR only case then reduces to just finding the optimal power splitting ratio ${\rho_i}$ between information decoding and energy harvesting at the relay. This problem can be posed as follows.

\begin{align}\label{P2}
&(\text {P2}) : ~\underset {R,\boldsymbol{\rho}}{\max } ~ R \tag{15}\\
&\hphantom {(\text {P})}\text {s.t.}~ R \leq \sum_{i=1}^{N_s} \log_2 \Big (1+\frac{(1-\rho_i)s P_s \lambda_{H,i}}{\sigma_p^2 + \sigma_f^2}\Big ) \tag{15a}\\
&\hphantom {(\text {P})\text {s.t.}~} R \leq \sum_{i=1}^{N_r} \log_2 \Big( 1+\frac{r P_r \lambda_{G,i}}{\sigma_d^2}\Big ) \tag{15b}\\
&\hphantom {(\text {P})\text {s.t.}~} P_r \leq \sum_{i=1}^{N_s} \rho_i s P_s \lambda_{H,i} + \sum_{i=1}^{N_r} \sum_{k=1}^{N_r} r P_r \rho_k \boldsymbol{\tilde{F}}_{ki} \boldsymbol{\tilde{F}}_{ki}^\ast \tag{15c}
\end{align}
Here scalar constants $r$ and $s$ are defined as $r \triangleq 1/N_r$ and $s \triangleq 1/N_s$. This optimization only concerns the relay, which as a receiving node has the knowledge of the S-R channel $\boldsymbol{H}$ and the self-interference channel $\boldsymbol{F}$ (active self-interference cancellation requires estimation of $\boldsymbol{F}$ at the relay~\cite{Katti2013}\cite{Katti2014}). We further assume that for power splitting at the relay, the relay node has knowledge of the eigenvalues for the R-D channel.  This assumption is reasonable since for a fading MIMO channel, the channel eigenmodes change much slower than the small scale fading parameters and can be obtained via feedback or estimation~\cite{Jafar2005}. 

Problem (P2) is simpler than (P-eq), since there are no power allocation variables at the transmit nodes as equal power is allocated for the transmitted signal from both the source and the relay. It is also more straightforward to see that (P2) is jointly convex in both $R$ and $\rho_i$. Constraint (15a) is linear in $R$ and convex in $\rho_i$, constraint (15b) is also linear in $R$, while constraint (15c) is affine and convex in $\rho_i$.

The Lagrangian for this case, with one primal variable vector $\boldsymbol{\rho}$, then becomes
\begin{align*}
&\mathcal{L}(\alpha, \mu, \boldsymbol{\rho}) = \alpha \sum_{i=1}^{N_s} \log_2 \Big (1+\frac{(1-\rho_i)s P_s \lambda_{H,i}}{\sigma_p^2 + \sigma_f^2} \Big ) \\
&+ \beta \sum_{i=1}^{N_r} \log_2 \Big( 1 + \frac{r P_r \lambda_{G,i}}{\sigma_d^2}\Big ) \\
&- \mu \Big (P_r - \sum_{i=1}^{N_s} s P_s \lambda_{H,i} \rho_i - \sum_{i=1}^{N_r} \sum_{k=1}^{N_r} r P_r \rho_k \boldsymbol{\tilde{F}}_{ki} \boldsymbol{\tilde{F}}_{ki}^\ast \Big ) \tag{16}
\end{align*}
where $\alpha$ and $\beta = 1 - \alpha$ are the dual variables associated with the rate constraints, (15a) and (15b), and $\mu$ is the dual variable corresponding to the relay power constraint (15c). We proceed with a similar approach as in Section IV to obtain the optimal power splitting ratio solution in terms of the dual variables as in Theorem 3.

\begin{theorem}
Non-uniform power splitting scheme is optimal even for the case with CSIR only, and the optimum power splitting ratio, $\rho_i^\star$, is given in closed form as
\begin{align*}
&\rho_i^\star = \Bigg ( 1 + \frac{\sigma_p^2 + \sigma_f^2}{\lambda_{H,i} s P_s}- \frac{\alpha}{\mu ( \lambda_{H,i} s P_s + \sum_{j=1}^{N_r} r P_r \boldsymbol{\tilde{F}}_{ij} \boldsymbol{\tilde{F}}_{ij}^\ast )} \Bigg )_0^1 \tag{17}
\end{align*}
\end{theorem}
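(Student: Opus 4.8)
The plan is to derive the closed form (17) from the Karush--Kuhn--Tucker (KKT) stationarity conditions of problem (P2), exactly as Theorem~1 was obtained for (P-eq), and then to read off non-uniform optimality from the index-dependence of the resulting formula. The excerpt already establishes that (P2) is jointly convex in $R$ and $\boldsymbol{\rho}$ and that a strictly feasible point exists, so the KKT conditions are necessary and sufficient; using $\beta = 1-\alpha$ (from $\nabla_R\mathcal{L}=0$, noted just after (16)), it then suffices to maximize the Lagrangian $\mathcal{L}(\alpha,\mu,\boldsymbol{\rho})$ of (16) over the box $0\le\rho_i\le 1$ and identify the maximizer coordinatewise.

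First I would isolate, for a fixed index $i$, the terms of (16) that depend on $\rho_i$. Here I treat $P_r$ as held fixed while differentiating in $\rho_i$: in the full KKT system $P_r$ is an additional primal quantity whose value is pinned down by the active harvested-power constraint (15c), so its $\boldsymbol{\rho}$-dependence does not enter this particular stationarity equation through the chain rule. Consequently the $\beta$-term $\sum_i\log_2\!\big(1+rP_r\lambda_{G,i}/\sigma_d^2\big)$ is constant in $\rho_i$ and drops out, and the only surviving contributions are $\alpha\log_2\!\big(1+\frac{(1-\rho_i)sP_s\lambda_{H,i}}{\sigma_p^2+\sigma_f^2}\big)$ from the first sum, the linear term $\mu\,sP_s\lambda_{H,i}\rho_i$, and -- the step needing the most care -- the part of the coupled double sum $\mu\sum_{m=1}^{N_r}\sum_{k=1}^{N_r} rP_r\rho_k\boldsymbol{\tilde{F}}_{km}\boldsymbol{\tilde{F}}_{km}^\ast$ with $k=i$, which equals $\mu\,rP_r\rho_i\sum_{j=1}^{N_r}\boldsymbol{\tilde{F}}_{ij}\boldsymbol{\tilde{F}}_{ij}^\ast$. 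Setting $\partial\mathcal{L}/\partial\rho_i=0$ (and absorbing the $1/\ln 2$ from differentiating $\log_2$, as done in Theorems~1--2) then collapses to the scalar equation $\frac{\alpha c_i}{1+c_i(1-\rho_i)}=\mu E_i$, with $c_i=\frac{sP_s\lambda_{H,i}}{\sigma_p^2+\sigma_f^2}$ and $E_i=\lambda_{H,i}sP_s+\sum_{j=1}^{N_r}rP_r\boldsymbol{\tilde{F}}_{ij}\boldsymbol{\tilde{F}}_{ij}^\ast$.

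Next I would solve this equation algebraically: rearranging gives $1-\rho_i=\frac{\alpha}{\mu E_i}-\frac{1}{c_i}$, hence $\rho_i=1+\frac{1}{c_i}-\frac{\alpha}{\mu E_i}$, and substituting $\frac{1}{c_i}=\frac{\sigma_p^2+\sigma_f^2}{\lambda_{H,i}sP_s}$ and the value of $E_i$ reproduces precisely the unclipped expression of (17). Finally I would invoke complementary slackness for the box constraints $\rho_i\ge 0$ and $\rho_i\le 1$: since $\mathcal{L}$ is concave in each $\rho_i$ separately (the first-sum term is concave, the remaining terms affine), whenever the interior stationary value falls outside $[0,1]$ the active boundary multiplier forces $\rho_i$ to the nearer endpoint, which is exactly the action of $(\cdot)_0^1$ -- so the projected formula is (17). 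The non-uniform optimality claim is then immediate from (17): $\rho_i^\star$ depends on $i$ both through $\lambda_{H,i}$ and through the $i$-th row energy $\sum_{j}\boldsymbol{\tilde{F}}_{ij}\boldsymbol{\tilde{F}}_{ij}^\ast$ of the effective self-interference channel, so outside degenerate channels the optimal ratios genuinely differ across beams and constraining $\rho_i\equiv\rho$ is strictly suboptimal.

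I expect the only real obstacle to be bookkeeping rather than analysis: extracting the coefficient of $\rho_i$ correctly from the coupled self-interference double sum (so that the row index of $\boldsymbol{\tilde{F}}$ is matched to the splitting index) and keeping the status of $P_r$ straight -- if $P_r$ is mistakenly allowed to vary with $\boldsymbol{\rho}$ through the chain rule, the $\beta$-term re-enters and the clean closed form of (17) is lost. Once these points are handled, the remainder is the same rearrange-and-project computation that yielded (10).
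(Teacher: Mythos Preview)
Your proposal is correct and follows exactly the paper's approach: the paper's own proof is a single line stating that (17) is obtained directly from the KKT conditions by setting $\nabla_{\rho_i}\mathcal{L}=0$ in the Lagrangian (16), and you have simply spelled out that computation in full detail. Your handling of the self-interference double sum, the treatment of $P_r$ as fixed, and the projection onto $[0,1]$ are all consistent with how the analogous derivation of (10) is carried out in Appendix~C.
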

\begin{proof}
Directly from KKT conditions by setting $\nabla \mathcal{L}_{\rho_i} = 0$ from (16).
\end{proof}  

Results of Theorem 3 allow us to design another primal-dual algorithm which uses (17) to update the primal variable's value and a sub-gradient method to update the dual variables. The sub-gradients, $(\Delta \alpha, \Delta \mu)$, are computed similar to (14). The Algorithm for CSIR-only case is omitted for brevity. However, the steps are similar to those of Algorithm 1 by replacing the appropriate closed form equations, with the exception of power allocation steps since for the CSIR-only case, there is equal power allocation with $p=P_s/N_s$ and $q = P_r/N_r$, unlike the optimal waterfilling power allocation solution in Algorithm 1.

\section{Numerical results and analysis}
\begin{figure}[t]
\begin{center}
\includegraphics[scale=0.5]{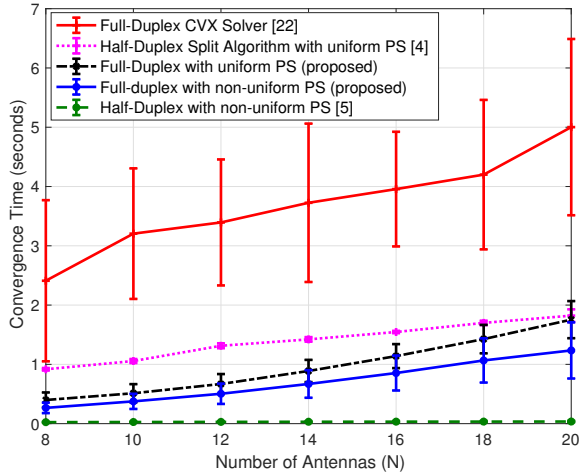}
\caption{Convergence comparison between the proposed algorithm for uniform and non-uniform power splitting, standard solver~\cite{cvx} and iterative algorithm~\cite{Alouini2016} for a 2xNx2 MIMO system with $P_s = 30$dBm}
\label{converg}
\end{center}
\end{figure} 
In this section, we analyze the performance of our proposed algorithms and compare with existing SWIPT techniques including half duplex and uniform power splitting. For simulations, we use path loss exponent $\gamma = 3.2$, and the distances between S-R and S-D respectively as $d_{sr} = 2\text{m and }d_{sd} = 10\text{m}$. At WiFi frequency 2.4 GHz for example, for a Uniform Linear Array (ULA) configuration with antenna elements separated by a distance $\lambda/2$, where $\lambda$ is the carrier wavelength, and reference distance $d_0 = 0.1$m (Fraunhofer distance~\cite{Selvan2017}, $d_f = 0.0625$m), these S-R and R-D distance values correspond to a path loss attenuation of around 20 dB and 80 dB from the source to the relay and from the relay to the destination respectively~\cite{seybold2005}. We assume an energy harvesting receiver sensitivity of -40dBm and a power loss due to active SI cancellation of 13mW~\cite{Otis2013}\cite{Sayed2017}. Energy harvesting receiver sensitivity denotes the minimum power level required to activate the circuit components at the receiver. Thermal noise floor is assumed to be -100 dBm, where the RSI noise/SNR loss in the baseband is assumed to be 1 dB unless otherwise stated~\cite{Katti2014}.

The following numerical simulations are averaged over 5000 independent channel realizations, where each element of the channel matrices, $\mathbf{H}$ and $\mathbf{G}$, is generated as $w_{ij} = (1/d)^{\gamma} u _{ij}$, with $u_{ij}\sim \mathcal{CN} (0,1)$ and $d$ as the distance between respective nodes. For example, for a transmit power of $P_s = 1$ W $= 0$ dBW, the Rayleigh fading channels have variances -20 dB and -80 dB for the S-R and R-D channel respectively. The loopback self-interference link is assumed to have Rician fading with shape parameter or K-factor $=30$ dB and effective channel gain, including the effect of path loss/passive physical isolation between the transmit and receive chains sharing antennas at the relay through a circulator device, as $\Omega = -20$dB~\cite{Sabharwal2012}\cite{Korpi2016}. The Rician channel $\bm F$ is modeled as the sum of a fixed component and a variable (or scattered) component as below
\begin{equation*}
\boldsymbol{F} = \sqrt{\Omega} \Big ( \sqrt{\frac{K}{K+1}} \boldsymbol{F_o} + \sqrt{\frac{1}{K+1}} \boldsymbol{F_w} \Big )
\end{equation*}
Here, $\sqrt{\frac{K}{K+1}} \boldsymbol{F_o} = \mathbb{E}[\boldsymbol{F}]$ is the line of sight component of the channel, $\sqrt{\frac{1}{K+1}} \boldsymbol{F_w}$ is the fading component that assumes uncorrelated fading. The elements of $\boldsymbol{F_o}$ have unit power, with its structure dependent on antenna configurations such as polarity, spacing, antenna type, and often has full-rank, whereas the elements for $\boldsymbol{F_w}$ are i.i.d $\mathcal{CN}(0, 1)$~\cite{Paulraj2003}\cite{Alouini2006}.

\subsection{Algorithm Convergence and Complexity}
\begin{figure}[t]
\begin{center}
\includegraphics[scale=0.5]{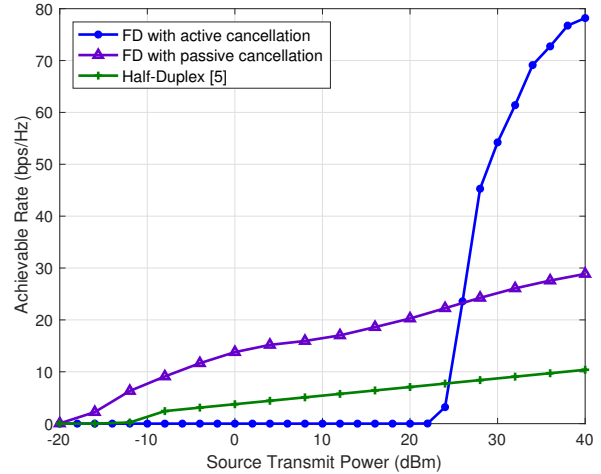}
\caption{Performance comparison with residual Self Interference directly proportional to $P_r$ for a 4x4x4 MIMO system, showing gains over half-duplex communication~\cite{Malik2018} and the different affects of active and passive self-interference cancellation.}
\label{fig9}
\end{center}
\end{figure}
Figure~\ref{converg} shows the average run-time performance versus the number of relay antennas for our proposed algorithm in comparison to several others. The results show that our proposed algorithm for full-duplex transmission with non-uniform power splitting is many times faster than the convex standard solver CVX~\cite{cvx}, where the difference comes from the efficient primal variables updates using closed form equations and the use of ellipsoid algorithm in ours in contrast to the heuristic successive approximation method of CVX. Figure~\ref{converg} also shows a comparison of our proposed algorithm with uniform power splitting to existing sequential methods for a half-duplex relay~\cite{Alouini2016}, which solve for both hops separately as a split problem and use an iterative grid search to find the uniform power splitting ratio. Note that our proposed uniform power splitting algorithm also includes the additional feature of self-interference harvesting which does not exist in~\cite{Alouini2016} and nonetheless it shows superior run-time performance.

We also include the average run time of the algorithm for half-duplex transmission with non-uniform power splitting in~\cite{Malik2018}, which follows similar steps as the proposed full-duplex algorithm in this work yet exhibits much faster convergence as seen in Figure~\ref{converg}. This difference speaks for the additional computational complexity due to the full-duplex feature, particularly in the relay power allocation factors and relay power splitting ratios as caused by the self-interference channel. This is evident in the update equations for both the power allocation in (9b) and the power splitting ratio in (10), which requires summation over products of elements in the rotated self-interference matrix. The \textit{water-level} for FD scheme is varying as seen in (9b) where as it is constant for the half-duplex transmission~\cite[Equation 8b]{Malik2018}. This added analytical and computational complexity causes the proposed full-duplex algorithm to converge slower than the half-duplex case. Such complexity is inherent to the full-duplex structure.

\subsection{Effect of Residual Self Interference in the baseband}
\begin{figure}[t]
\begin{center}
\includegraphics[scale=0.5]{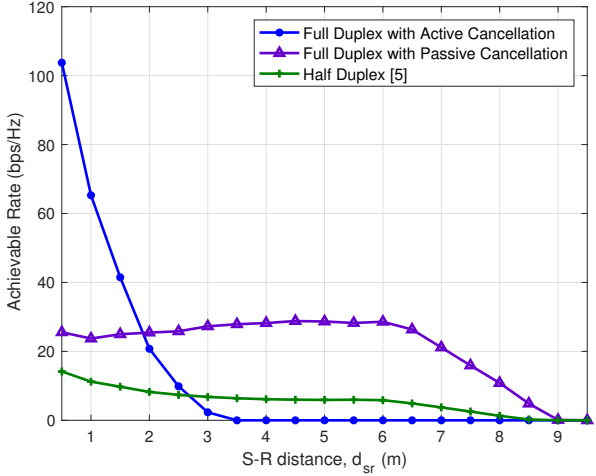}
\caption{Achievable rates versus source-relay distance in a 2x2x2 MIMO system with $P_s = 25$dBm. Active cancellation is optimal when the relay is closer to the source and passive cancellation is optimal as the relay moves away. }
\label{fig10}
\end{center}
\end{figure}
Figure~\ref{fig9} shows a comparison between half-duplex communication and two implementations of full-duplex communication - using active cancellation with $P_{IC} \neq 0$ and using passive cancellation only with $P_{IC} = 0$. For the former implementation, the residual self interference in baseband corresponds to a constant raise in the noise floor, while for the latter this residual self interference is directly proportional to the transmit power $P_r$ of the relay. For these results, with residual self interference $\sigma_f^2 = \alpha P_r^\beta$, we use $\beta = 1$ and $\alpha = 10^{-4}$. Using $\beta = 1$ implies that the residual interference increases linearly with the relay transmit power $P_r$ and the chosen value of $\alpha$ corresponds to a $40$dB interference cancellation. This can be achieved through a common passive device, like a circulator, which is able to provide 20-40 dB isolation between the transmit and receive chains~\cite{Korpi2016}\cite{Kiayani2018}.

We see that for low source transmit powers, the active cancellation FD is unable to harvest sufficient power for cancellation and transmission, and therefore achieves zero rate for $P_s \leq 20$ dBm. On the other hand, the FD scheme with passive cancellation offers enhanced rate performance over half-duplex communication for low as well as high source transmit powers. Comparing between the active and passive cancellation full-duplex schemes, we see a sharp increase in rate for $P_s > 20$ dBm for the active cancellation scheme, with significant rate gain over both half-duplex and passive full-duplex schemes. In fact, as the source power is above a threshold, the active FD scheme harvests power sufficient enough for both active cancellation as well as transmission, and since the residual interference with active cancellation is reduced to the noise floor, transmission rate is significantly enhanced. On the other hand, with increased source power, for the passive cancellation scheme, the residual self interference in baseband also increases, and therefore the rate performance is sluggish. This plot suggests an interesting observation that a hybrid self-interference cancellation methodology could be implemented, where in the low SNR regime, passive cancellation only should suffice, whereas in the medium and high SNR regime, active cancellation helps in maximally utilizing the benefits of full-duplex communication.

\subsection{Effect of distance between the source and relay}
\begin{figure}[t]
\begin{center}
\includegraphics[scale=0.5]{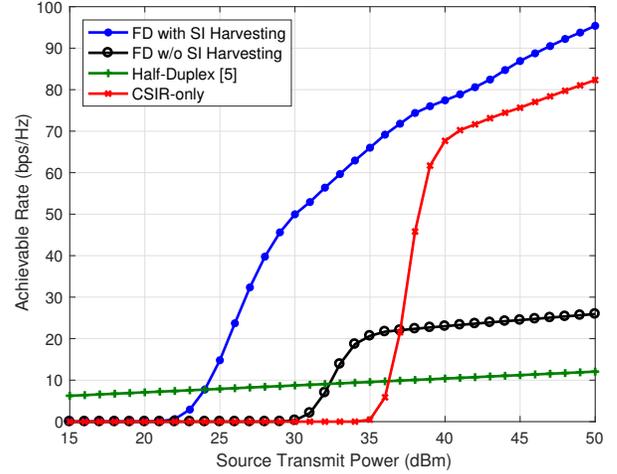}
\caption{Performance Comparison of the proposed FD scheme with SI harvesting with (i) a full-duplex scheme without SI power harvesting, (ii) a full-duplex scheme with CSI at receiver (CSIR) only, and (iii) a half-duplex scheme~\cite{Malik2018} for a 4x4x4 MIMO system}
\label{fig2}
\end{center}
\end{figure}

We consider a scenario where the source and destination location is fixed, with $d_{sd} = 10$m, and the relay is gradually moved away from the source towards the destination. The source transmit power is fixed at $P_s = 25$dBm, and a 2x2x2 MIMO system is considered. For this scenario, we compare the performance of half-duplex communication with full-duplex communication for two cases - passive self-interference cancellation only, and both passive and active cancellation. Figure~\ref{fig10} shows that even with passive self-interference cancellation only, full-duplex transmission always outperforms half-duplex communication. When the relay is closer to the source, such that sufficient power can be harvested, the superior performance of the active cancellation scheme is noteworthy. However as the distance between source and relay is increased beyond a certain length, it renders the harvested power to be insufficient for the relay to sustain both active cancellation and data transmission in the second hop. Therefore, again we could argue for a hybrid cancellation policy, where for the relay closer to the source (large $P_r$), active cancellation is preferable and for larger distances between the source and relay, passive cancellation should be employed. Using this hybrid cancellation policy would ensure that full-duplex transmission would always outperform half-duplex transmission at a maximal gain.

\subsection{Performance comparison to other SWIPT schemes}
\begin{figure}[t]
\begin{center}
\includegraphics[scale=0.5]{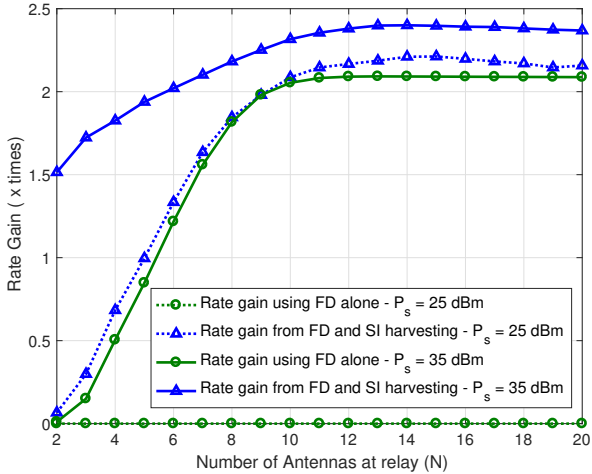}
\caption{Mutiplicative rate gain over the half-duplex scheme by (i) Full-duplex communication alone (ii) FD communication with SI harvesting for 2xNx2 MIMO systems}
\label{fig2a}
\end{center}
\end{figure} 
Figure~\ref{fig2} shows a comparison of our proposed scheme to (i) a full-duplex scheme without SI power harvesting, (ii) a full-duplex scheme with CSIR only, and (iii) a half-duplex scheme using the traditional half-duplex MIMO channel as in~\cite{Malik2018} with non-uniform power splitting. In this set of simulations and all following subsections, the baseband residual self-interference power is set to a constant regardless of the source transmit power, as discussed in Section II.B. For the FD scheme without SI harvesting, values for the channel gains from self-interference channel are set to zero to remove the component of SI power harvesting from the expression for $P_r$ when calculating the values for the primal variables as given in equations (9a), (9b) and (10). Results in Figure 3 allow us to analyze several aspects of the proposed scheme as discussed next in the following subsections.

\subsection{Benefit of full-duplex transmission and self-interfernce harvesting}
From Figure~\ref{fig2}, we observe that at both low and high source transmit powers, our proposed FD scheme with self-interference power harvesting significantly outperforms the full-duplex transmission with no self-interference harvesting, demonstrating the immense benefit of self-interference harvesting in FD communications.

Figure~\ref{fig2a} shows the multiplicative rate gains achieved at fixed source transmit powers, $P_s = 35$ dBm and $P_s = 25$ dBm, with a fixed number of antennas at source and destination $(N_s = N_d = 2)$, and an increasing number of antennas at the relay $(N)$. We separately show the multiplicative gain from using full duplex over half-duplex communication alone, and from both self-interference harvesting and full-duplex transmission to emphasize the effect of each aspect. 
\begin{figure}[t]
\begin{center}
\includegraphics[scale=0.5]{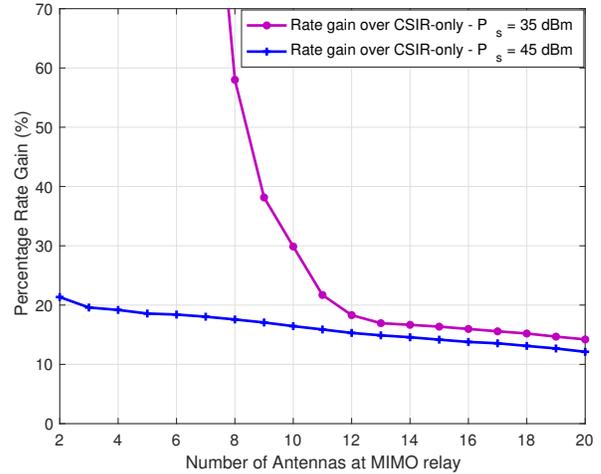}
\caption{Percentage rate gain of full CSIR over CSIR-only for 2xNx2 MIMO systems}
\label{fig2b}
\end{center}
\end{figure}
At a moderate source transmit power of $P_s = 35$ dBm we see the rate gain of our proposed scheme, with full-duplex communication and self-interference power harvesting, gradually increasing and settling to a constant at more than $2.5\times$ achievable rate of the half-duplex scheme.  About half of this gain comes from full-duplex transmission and the other half from self-interference harvesting. Consistent with Figure~\ref{fig2}, at lower source transmit power of $P_s = 25$ dBm, we see that the multiplicative rate gain for the full-duplex scheme (with SI harvesting) at $N \leq 3$ is less than 1, which implies that the half-duplex scheme outperforms the FD scheme in this region because the harvested energy in FD system here is not enough to power both signal transmission and active self-interference cancellation. Interestingly the rate gain from FD transmission alone at $P_s = 25$ dBm is zero across all simulated relay antenna settings, since the harvested power from the source signal alone is insufficient and all the actual rate gain at this low source transmit power comes from self-interference harvesting. As the number of antennas at the relay node increases, higher self-interference allows better energy harvesting and leads to an increase in the rate gain of the FD scheme with self-interference energy harvesting to more than $2\times$ the half-duplex rate. 

We thus see that the rate gain of FD over half-duplex transmission is drastic provided a sufficient number of relay antennas; further this gain increases with higher source transmit power. Thus FD transmission enables multiple folds in throughput gain via both the efficient spectral usage and self-interference harvesting. 

\subsection{Benefit of CSI at transmitter}
Comparing in Figure~\ref{fig2} between our proposed scheme, with self-interference harvesting and precoding design, and the CSIR only case, where transmitting nodes allocate equal power across all eigenmodes, we see that the difference in performance is more pronounced at lower transmit power. As the SNR increases, the difference in throughput gradually reduces to a constant performance gap. Even with CSIR only, the FD scheme with non-uniform power splitting still achieves significant throughput gain over half-duplex transmission, demonstrating the feasibility of FD information transfer and energy harvesting with just CSIR.

Figure~\ref{fig2b} demonstrates the percentage rate gain achieved by our scheme with full-duplex transmission and pre-coding design over the FD scheme with CSIR-only at fixed source transmit power levels, with two antennas at the source and destination, and an increasing number of antennas at the relay (N). This percentage rate gain is calculated as: $\%\text{RateGain}=\frac{(R_{CF}-R_{CR})}{R_{CR}} ×100\%$ , where $R_{CF}$ denotes the achievable rate of our proposed scheme in the full CSI case and $R_{CR}$ denotes the achievable rate of our scheme in the CSIR-only case. Consistent with Figure~\ref{fig2}, at $P_s = 35$ dBm for lower number of antennas, the CSIR-only case has nearly zero rate since the harvested power is not sufficient for active cancellation and signal transmission, leading to almost infinite rate gain by our scheme. As the number of antennas is increased, however, we see that the rate gain of full-CSI over CSIR-only case reduces to a constant $\sim$ 15\%. At the increased source transmit power of 45 dBm, the CSIR-only scheme can function at all relay antenna settings and the performance gap between full-CSI and CSIR-only reduces even further to just between 10-12\%. At 25dBm source power, however, the CSIR-only rate is zero for all simulated relay antenna settings and is therefore not shown. These results show that the value of having CSI at the transmitters is most pronounced at low source transmitter power and/or small number of relay antennas. As the source power or the number of relay antennas increases, the difference in throughput settles to a small gap and may justify the practical approach of having CSIR only.
  
\subsection{Benefits of using multiple antennas}
\begin{figure}[t]
\begin{center}
\includegraphics[scale=0.5]{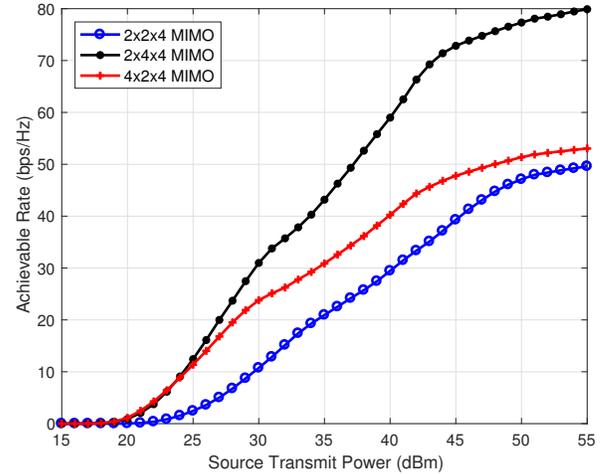}
\caption{Throughput comparison among different MIMO configurations, showing the benefit of having more antennas at the relay.}
\label{fig3}
\end{center}
\end{figure}
Figure~\ref{fig3} shows a comparison between a MIMO system with two antennas at both the source and relay, to the case when we employ four antennas at the relay, $N_r = 4$, and when we have four transmit antennas at the source node, $N_s = 4$. The number of antennas at the destination is fixed at $N_d = 4$. Both the 4x2x4 and 2x4x4 MIMO configurations offer higher gains than the 2x2x4 MIMO case because of the additional antennas at the source or relay. The 2x4x4 MIMO in particular has the advantage of diversity gain as well as power gains~\cite{Tse2005}, and furthermore, having more antennas at the relay corresponds to higher self-interference power harvesting. We therefore observe that the 2x4x4 MIMO system achieves the highest throughput, significantly higher than the other two systems. Furthermore, from Figure~\ref{fig2a}, we see that increasing the number of relay antennas up to a certain point brings out more benefit of full duplexing and self-interference harvesting, and the benefit diminishes after that point. These results suggest there exists an optimal number of relay antennas (in relation to and slightly higher than the number of source and destination antennas) that a full-duplex self-interference harvested system should use.
\begin{figure}[t]
\begin{center}
\includegraphics[scale=0.5]{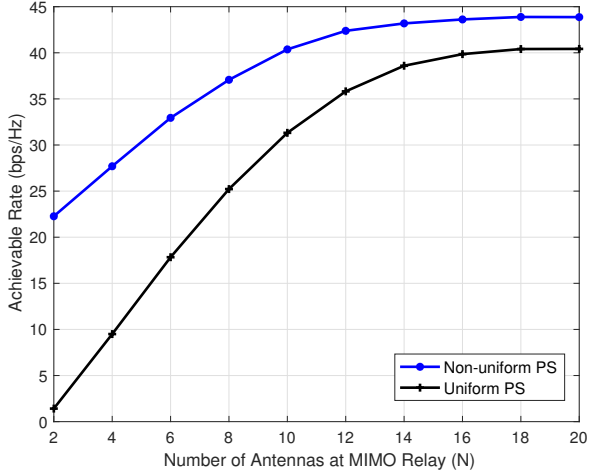}
\caption{Comparison between FD non-uniform and uniform power splitting schemes with increasing number of relay antennas at $P_s = 35$ dBm, $N_s = N_d = 2$}
\label{fig4}
\end{center}
\end{figure}

\subsection{Comparison to uniform power splitting}
Figure~\ref{fig4} shows how the proposed non-uniform power splitting scheme offers significant rate gain over uniform power splitting especially at a moderate number of antennas. Here we fix the number of antennas at source and destination at two $(N_s = N_d = 2)$, and increase the number of antennas at the relay. The rate gain of non-uniform over uniform power splitting is pronounced at a small to moderate number of relay antennas but reduces as the number of relay antennas increases. This is due to the channel hardening effect observed when the number of relay antennas is significantly larger than the number of antennas at the source and destination~\cite{Narasimhan2014}. As the channel hardens for the tall matrix $\boldsymbol{H}$, the square matrix $\boldsymbol{H^\ast H}/N_r$ converges to $\boldsymbol{I}$~\cite{Tarokh2004}, such that the channel eigenvalues approach 1; similarly for the wide matrix $\boldsymbol{G}$. Thus the channel eigenvalues for the Rayleigh fading S-R and R-D channels converge and become almost similar, $\lambda_{H,i} \approx \lambda_{H}$, $\lambda_{G,i} \approx \lambda_{G}$. Even though the self-interference channel $\boldsymbol{F}$ is Rician, $\boldsymbol{\tilde{F}_{kj} \tilde{F}_{kj}^\ast}$ in the expression for $P_r$ in (7) correspond to the modified matrix $\tilde{\boldsymbol{F}} = \boldsymbol{U_H^\ast FV_G}$, where $\boldsymbol{U_H}$ and $\boldsymbol{V_G}$ are the left and right singular vector matrices of the S-R and R-D Rayleigh channels $\boldsymbol{H}$ and $\boldsymbol{G}$ respectively, therefore the converging eigenvalues of $\boldsymbol{H}$ and $\boldsymbol{G}$ also affect the matrix $\boldsymbol{\tilde{F}}$. Thus as the number of relay antennas alone increases, both power allocation and power splitting approach to being uniform due to similar channel eigenvalues in (7). However, while the performance gap reduces, the two rates do not converge, and we see that non-uniform power splitting maintains a rate higher than that achieved by uniform power splitting.

Figure~\ref{fig5} shows a comparison of the two power splitting schemes for different MIMO configurations as the source transmit power increases. We see that using non-uniform power splitting, we can obtain non-zero rate at lower source transmit powers as opposed to uniform power splitting, and the gap widens significantly as the number of antennas increases. As the SNR increases, both splitting schemes converge; however, at standard transmit powers, for example around 25 - 35 dBm for WiFi, the non-uniform power splitting scheme achieves very low rates even in the 8x8x8 MIMO configuration, which necessitates non-uniform power splitting in this case.

\section{Conclusion}
\begin{figure}[t]
\begin{center}
\includegraphics[scale=0.5]{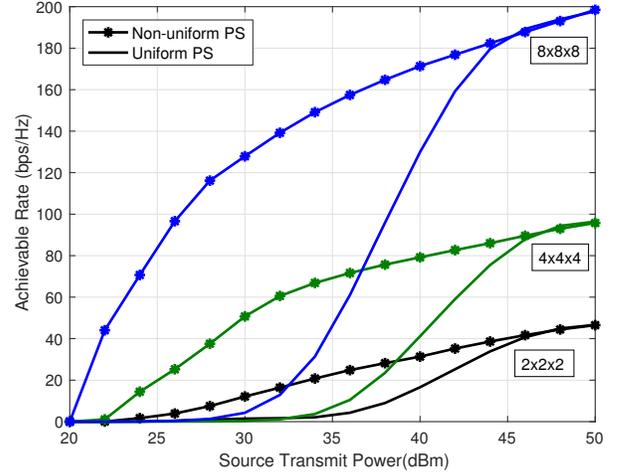}
\caption{Rate performance of uniform and non-uniform power splitting schemes with increasing source transmit power and different antenna configurations (source x relay x destination)}
\label{fig5}
\end{center}
\end{figure}
This paper investigated a full-duplex MIMO relay channel with a self-sustained relay harvesting power from both source and self-interference signals using a non-uniform power splitting technique. We formulated a rate optimization problem to jointly optimize precoders power allocation and relay power splitting, then designed an efficient primal-dual algorithm to solve it. Our analysis showed the optimality of non-uniform power splitting in a MIMO system. We showed how harvesting self-interference power in the RF domain can prove beneficial for full duplex relays, with significant rate gains over half-duplex systems. Numerical results demonstrated that by using larger MIMO systems or higher transmit power at the source, the harvested energy can potentially cater for both the power consumed by modern active cancellation circuits and the power needed for relaying transmission. These results also suggest that a hybrid self-interference cancellation policy with passive cancellation only at low SNR and added active cancellation at higher SNR can realize the maximum benefit of full-duplex communication over half-duplex transmission. We further analyzed the performance of our scheme for the case with CSI present at receiving nodes only, and showed that with more antennas at the relay or at moderately high source transmit powers, full-duplex MIMO gains with energy harvesting can be exploited even with no precoding design at the transmitting nodes.

\appendix
\subsection{Proof of Lemma~\ref{equality}}\label{appendixA}
Combining constraints (5d) and (5e) in (P) we get
\begin{align*}
&\text{tr} \left (\left(\boldsymbol{I - F^\ast Q_r^\ast \Lambda_{\rho} Q_r F} \right) \boldsymbol{W_r} \right ) \leq \text{tr} \left(\boldsymbol{\Lambda_{\rho} Q_r H W_s H^\ast Q^\ast}  \right) - P_{IC} \tag{A1}
\end{align*}
\begin{enumerate}[leftmargin=*]
\item We first consider the optimizing variable $\boldsymbol{W_s}$ which appears in constraints (5a), (5c) and the combined constraint (A1). Note that the term $\boldsymbol{Q_r H W_s H^\ast Q_r^\ast}$ appears in both (5a) and (A1). Without affecting constraint (5c), since $\boldsymbol{\Lambda_{\rho}}$ is a diagonal matrix, by applying the Hamadard inequality and an inequality relating the trace of a matrix product to the sum of eigenvalue products (\cite{marshall1979inequalities}, Chapter 9), 
both the right hand side of (5a) and (A1) are simultaneously maximized if we choose
\begin{align*}
&\boldsymbol{Q_r = U_H^\ast}\\
&\boldsymbol{W_s = V_H \Lambda_{s} V_H^\ast} \tag{A2}
\end{align*}
where $\boldsymbol{U_H}$ and $\boldsymbol{V_H}$ are obtained from the singular value decomposition of S-R channel matrix as $\boldsymbol{H = U_H \Sigma_H V_H^\ast}$. The constraint (5a) then reduces to
\begin{align*}
&R \leq \log_2 \left | \boldsymbol{I} + \frac{\boldsymbol{(I - \Lambda_{\rho}) \Sigma_H \Lambda_s \Sigma_H^\ast}}{\sigma_p^2 + \sigma_f^2}\right | \tag{A3}
\end{align*}
In short, $ \boldsymbol{\Lambda_{\rho} Q_r H W_s H^\ast Q^\ast}$ is completely diagonalized. Similarly with $\boldsymbol{\Lambda_H \triangleq \Sigma_H \Sigma_H^\ast}$, (5d) reduces to
\begin{align*}
\text{tr} \left (\left(\boldsymbol{I - F^\ast U_H \Lambda_{\rho} U_H^\ast F} \right) \boldsymbol{W_r} \right ) \leq \text{tr} \left(\boldsymbol{\Lambda_{\rho} \Lambda_s \Lambda_H}  \right) - P_{IC} \tag{A4}
\end{align*}
\item Next we consider the optimizing variable $\boldsymbol{W_r}$ for which it is more straightforward to use the original constraints in (5b), (5d) and (5e) instead of (A1). We re-write (5e) as
\begin{align*}
&P_r = \text{tr} \left(\Lambda_{\rho} \left(\boldsymbol{\Lambda_s \Lambda_H + U_H^\ast F W_r F^\ast U_H} \right) \right) - P_{IC}\tag{A5}
\end{align*}
Considering constraints (5b), (5d) and (5e), because of the non-uniform power splitting matrix $\boldsymbol{\Lambda_{\rho}}$ in (5e), the eigenvectors of $\boldsymbol{W_r}$ can be chosen to maximize the right hand side of (5b) subject to (5c), without the need to consider (5e). Essentially, we fix the relay transmit beamforming vectors to maximize the transferred/harvested power at the relay and the optimality of the solution is not affected since non-uniform power splitting provides the necessary degrees of freedom for optimization. Maximizing the right hand side of (5b) subject to (5d) is then waterfilling with
\begin{align*}
&\boldsymbol{W_r = V_G \Lambda_r V_G^\ast} \tag{A6}
\end{align*}
where $\boldsymbol{V_G}$ is obtained from the singular value decomposition of the R-D channel matrix as $\boldsymbol{G = U_G \Sigma_G V_G^\ast}$ and the optimal transmit beamforming vectors from the relay are $\boldsymbol{V_G}$.
It is worth noting here that with uniform power splitting which we discuss for comparison in Section IV-B, the transmit beam vectors from the relay will depend on both channels $\boldsymbol{G}$ and $\boldsymbol{F}$. However, in this paper, for fair comparison, we choose $\boldsymbol{W_r}$ to have the same structure as in non-uniform power splitting, such that $\boldsymbol{W_r = V_G \Lambda_r V_G^\ast}.$
\item Next we consider the relay transmit power, $P_r$ which appears in constraints (5d) and (5e). With $\boldsymbol{W_s}$ and $\boldsymbol{Q_r}$ as given in (A2), $P_r$ has the form as given in (A5). Focusing on the second term in the expression for $P_r$ in (A5), then by substituting $\boldsymbol{W_r}$ from (A6), we have
\begin{align*}
P_r &= \text{tr} \left(\Lambda_{\rho} \boldsymbol{\Lambda_s \Lambda_H} \right ) +\text{tr} \left (\boldsymbol{\Lambda_{\rho} U_H^\ast F V_G \Lambda_r V_G^\ast F^\ast U_H} \right )- P_{IC}\\
&= \text{tr} \left(\Lambda_{\rho} \boldsymbol{\Lambda_s \Lambda_H} \right ) + \text{tr} \left (\boldsymbol{\Lambda_{\rho} \tilde{F} \Lambda_r \tilde{F}^\ast} \right ) - P_{IC}
\end{align*}
where the auxiliary matrix $\boldsymbol{\tilde{F} \triangleq U_H^\ast F V_G}$. For a diagonal matrix $\boldsymbol{D}$ and a general matrix $\boldsymbol{A}$, we can use the matrix identity $\text{tr} \left( \boldsymbol{AD}\right) = \text{tr} \left( \text{diag}(\boldsymbol{A}) \boldsymbol{D}\right)$ and re-write $P_r$ as
\begin{align*}
P_r = \text{tr} \left( \Lambda_{\rho} \boldsymbol{\Lambda_s \Lambda_H} \right ) + \text{tr} \left (\boldsymbol{\Lambda_{\rho}} \text{diag} \left( \boldsymbol{\tilde{F} \Lambda_r \tilde{F}^\ast} \right) \right ) - P_{IC}
\end{align*}
Next, applying the following matrix identity for a diagonal matrix D and a general matrix $\boldsymbol{A}$ 
$$\left( \boldsymbol{A D A^\ast}\right)_{kk} = \sum_{i} d_i a_{ki} a_{ki}^\ast$$
we reach the equivalent expression in constraint (6d).
\end{enumerate}
\subsection{Proof of Lemma~\ref{convexity}}\label{appendixB}
The objective function $R$ is linear.
\begin{itemize}[leftmargin=*]
\item For constraint (6a), we define $g(\rho_i, p_i) = 1 + (1 - \rho_i) \lambda_{H,i} p_i$, which is neither convex nor concave, since its Hessian, $\nabla^2 g = [0 \ \ -\lambda_{H,i}; -\lambda_{H,i} \ \ 0]$, is indefinite; with eigenvalues $\pm \lambda_{H,i}$. The superlevel sets, $\{(\rho_i, p_i) \in \mathbb{R}_{+}^2), g(\rho_i p_i) \geq t \}$, are convex for all $t$, which makes $g(\rho_i p_i)$ a quasi-concave function~\cite{Boyd2004}. Applying the implicit constraints; $\rho_i, p_i \geq 0$, \textbf{dom} $g(\rho_i, p_i) \subset \mathbb{R}^2_{+}$, the composition function, $f = h \circ g$ in (6a), of the non-decreasing function $h(\rho_i, p_i) = \log(g(\rho_i, p_i))$ and quasi-concave function $g(\rho_i, p_i)$, is then quasiconcave with convex superlevel sets~\cite{Peter1999}.
\item (6b) is linear in $R$ and $\log(1 + q_i \lambda_{G,i})$ is concave in $q_i$.
\item The constraint (6c) is affine.
\item For the harvested power constraint, (6d), the right-hand-side of the inequality constraint has the form $\sum_{i = 1}^{K_1} f_1(\rho_i, p_i)  = \sum_{i = 1}^{K_1} \lambda_{H,i} \rho_i, p_i$. Applying the implicit constraints on $p_i \text{ and } \rho_i$, we have $\textbf{dom} (p_i, \rho_i) \subset \mathbb{R}^2_{+}$. Similar to (6a), the Hessian given as, $\nabla ^2 f_1 = [ 0 \ \  \lambda_{H,i}; \lambda_{H,i} \ \ 0 ]$, has eigenvalues $\pm \sqrt{\lambda_{H,i}^2}$, and is indefinite, so the function is neither convex nor concave. The superlevel sets, $\big \{(\rho_i, p_i) \in \mathbb{R}_{+}^2 \big \vert f_1(\rho_i, p_i) \geq t \big \}$, are convex for all $t \geq 0$ and hence $f_1(\rho_i, p_i)$ is quasi-concave or equivalently (6d) is quasi-convex. The convex sublevel sets for the quasi-convex constraint preserve the convexity of the domain of the problem. The left hand side of (6d) has the form, $\sum_{i = 1}^{K_2} f_2(\rho_i, q_i)  = \sum_{j = 1}^{K_2} q_j - \sum_{k = 1}^{K_2}\sum_{j = 1}^{K_2}  \rho_k q_j \boldsymbol{\tilde{F}_{kj} \tilde{F}_{kj^\ast}}$. The second summation term has the same form as $f_1(\rho_i, p_i)$, and is therefore quasi-convex. The first summation term $\sum_{i = 1}^{K_2} q_i$ is a linear sum and is therefore convex.
\end{itemize}
Problem (P2) is then optimization of a convex objective function over a convex set and convex sublevel sets, and is hence a convex optimization problem~\cite[p.~136-138]{Boyd2004}. 

\subsection{Proof of Theorem 1}\label{appendixC}
Setting $\nabla_{p_i} \mathcal{L} = 0$ from the Lagrangian in (8) to obtain
\begin{align*}
&\frac{\delta \mathcal{L}}{\delta p_i} = \frac{\alpha (1-\rho_i)\lambda_{H,i}}{\sigma_p^2 +\sigma_f^2 + (1-\rho_i) \lambda_{H,i} p_i} - \nu + \mu \rho_i \lambda_{H,i} = 0\\
&\iff \alpha (1-\rho_i)\lambda_{H,i} - \left( \nu + \mu \rho_i \lambda_{H,i} \right ) \\
& \times \left( \sigma_p^2 +\sigma_f^2 + (1-\rho_i) \lambda_{H,i} p_i\right)=0 \\
&\iff \left( \nu + \mu \rho_i \lambda_{H,i} \right )(1-\rho_i)\lambda_{H,i} p_i \\
&\hphantom{\iff}= \alpha (1-\rho_i)\lambda_{H,i} - \left( \nu + \mu \rho_i \lambda_{H,i} \right ) \left( \sigma_p^2 +\sigma_f^2\right)\\
&\iff p_i = \frac{1}{\left( \nu + \mu \rho_i \lambda_{H,i} \right )(1-\rho_i)\lambda_{H,i}}\big [\alpha (1-\rho_i)\lambda_{H,i} \\
&\hphantom{\iff p_i = \frac{1}{\left( \nu + \mu \rho_i \lambda_{H,i} \right )(1-\rho_i)\lambda_{H,i}}}-  \left( \nu + \mu \rho_i \lambda_{H,i} \right ) \left( \sigma_p^2 +\sigma_f^2 \right) \big ]
\end{align*}
Combining with the boundary conditions leads to $p_i^\star$ in (9a).
 
Setting $\nabla_{q_i} \mathcal{L} = 0$
\begin{align*}
&\frac{\delta \mathcal{L}}{\delta q_i} = \frac{(1- \alpha)\lambda_{G,i}}{\sigma_d^2+ \lambda_{G,i} q_i} - \mu \left (1 - \sum_{k=1}^{K_2} \rho_k \boldsymbol{\tilde{F}}_{kj} \boldsymbol{\tilde{F}}_{kj}^\ast \right ) = 0\\
&\iff (1 - \alpha) \lambda_{G,i} - \mu \left (1 - \sum_{k=1}^{K_2} \rho_k \boldsymbol{\tilde{F}}_{kj} \boldsymbol{\tilde{F}}_{kj}^\ast \right ) (\sigma_d^2 + \lambda_{G,i} q_i) = 0\\
&\iff q_i \big [\mu \lambda_{G,i} \left (1 - \sum_{k=1}^{K_2} \rho_k \boldsymbol{\tilde{F}}_{kj} \boldsymbol{\tilde{F}}_{kj}^\ast \right ) \big] \\
&\hphantom{\iff}= (1 - \alpha) \lambda_{G,i} -\mu \sigma_d^2 \left (1 - \sum_{k=1}^{K_2} \rho_k \boldsymbol{\tilde{F}}_{kj} \boldsymbol{\tilde{F}}_{kj}^\ast \right)
 \end{align*}
Re-arranging the terms and combining with boundary conditions leads to the expression for $q_i^\star$ in (9b).

Setting $\nabla_{\rho_i} \mathcal{L} = 0$ 
 \begin{align*}
 &\frac{\delta \mathcal{L}}{\delta \rho_i} = \frac{-\alpha p_i \lambda_{H,i}}{\sigma_p^2 + \sigma_f^2 +(1-\rho_i) p_i \lambda_{H,i}} + \mu \sum_{k=1}^{K_2} q_k \boldsymbol{\tilde{F}}_{ik} \boldsymbol{\tilde{F}}_{ik}^\ast + \mu \lambda_{H,i} p_i  = 0 \\
 &\iff -\alpha p_i \lambda_{H,i} + \left( \mu \sum_{k=1}^{K_2} q_k \boldsymbol{\tilde{F}}_{ik} \boldsymbol{\tilde{F}}_{ik}^\ast + \mu \lambda_{H,i} p_i \right) \\
 &\hphantom{\iff \ \ } \left( \sigma_p^2 + \sigma_f^2 + (1-\rho_i) p_i \lambda_{H,i}\right) = 0\\
 &\iff (1 - \rho_i) (\lambda_{H,i} p_i) \left( \mu \sum_{k=1}^{K_2} q_k \boldsymbol{\tilde{F}}_{ik} \boldsymbol{\tilde{F}}_{ik}^\ast + \mu \lambda_{H,i} p_i \right) \\
 & = \alpha \lambda_{H,i} p_i - (\sigma_p^2 + \sigma_r^2)\left( \mu \sum_{k=1}^{K_2} q_k \boldsymbol{\tilde{F}}_{ik} \boldsymbol{\tilde{F}}_{ik}^\ast + \mu \lambda_{H,i} p_i \right)
 \end{align*} 
Re-arranging the terms leads to equation for $\rho_i^\star$ in (10).
 

\bibliographystyle{./IEEEtran}
\bibliography{./jsacbib}
\end{document}